\newtheorem{theorem}{Theorem}
\newtheorem{definition}{Definition}
\newcommand{\SWAP}{{\rm SWAP}}
\newcommand{\SU}{{\rm SU}}
\newcommand{\CNOT}{{\rm CNOT}}
\newcommand{\elementgate}{ 
            \raisebox{-0.2em}{\begin{tikzpicture}[x=0.11pt,y=0.11pt,yscale=-1.7,xscale=1.7]
\draw  [line width=0.5]  (38.45,8) -- (62.3,31.85) -- (38.45,55.7) -- (14.6,31.85) -- cycle ;
\draw  [fill={rgb, 255:red, 0; green, 0; blue, 0 }  ,fill opacity=1 ] (26.6,25.79) .. controls (26.6,22.42) and (29.33,19.69) .. (32.7,19.69) .. controls (36.07,19.69) and (38.8,22.42) .. (38.8,25.79) .. controls (38.8,29.15) and (36.07,31.89) .. (32.7,31.89) .. controls (29.33,31.89) and (26.6,29.15) .. (26.6,25.79) -- cycle ;
\draw [line width=0.5]    (49.97,19.29) -- (60.26,9.29) ;
\draw [line width=0.5]    (15.97,53.86) -- (26.26,43.86) ;
\draw [line width=0.5]    (16.26,9) -- (26.83,19.57) ;
\draw [line width=0.5]    (51.69,43.57) -- (62.26,54.14) ;

            \end{tikzpicture}}
}
\begin{document}

\preprint{APS/123-QED}

\title{Universal quantum computing with a single arbitrary gate} 

\author{Zhong-Yi Ni}
 \email{zni573@connect.hkust-gz.edu.cn}
 \affiliation{Hong Kong University of Science and Technology (Guangzhou), Guangzhou, China}
\author{Yu-Sheng Zhao}
\email{yzhao053@connect.hkust-gz.edu.cn}
 \affiliation{Hong Kong University of Science and Technology (Guangzhou), Guangzhou, China}
\author{Jin-Guo Liu}
 \email{jinguoliu@hkust-gz.edu.cn}
 \affiliation{Hong Kong University of Science and Technology (Guangzhou), Guangzhou, China}

\begin{abstract}
    This study presents a roadmap towards utilizing a single arbitrary gate for universal quantum computing. 
    Since two decades ago, it has been widely accepted that almost any single arbitrary gate with qubit number $>2$ is universal. Utilizing a single arbitrary gate for compiling is beneficial for systems with limited degrees of freedom, e.g. the scattering based quantum computing schemes. However, how to efficiently compile the wanted gate with a single arbitrary gate, and finally achieve fault-tolerant quantum computing is unknown.
    In this work, we show almost any target gate can be compiled to precision $\epsilon$ with a circuit depth of approximately $\log(\epsilon^{-1})$ with an improved brute-force compiling method. Under the assumption of reasonable classical resource, we show the gate imperfection can be lowered to $10^{-3}$. By treating the imperfection as coherent error, we show that the error can be further reduced by roughly two orders of magnitude with a measurement-free quantum error correction method.
\end{abstract}

\maketitle


\section{Introduction}
Universal quantum computing is one of the ultimate goals of quantum technology.
It is believed that quantum computers can solve certain problems intractable for classical computers~\cite{shor1999polynomial,georgescu2014quantum}.
However, current quantum computing technology is still far from achieving this goal~\cite{Preskill2018}. Scalability and gate precision are the two major challenges.
Although a number of physical systems have been proposed for scalable quantum computing, their scalability is still limited.
The state-of-the-art quantum computing devices, such as neutral atoms array~\cite{Bluvstein2023} and superconducting devices~\cite{Arute2019} can reach a few hundreds of qubits. The further scaling is difficult due to both the power limitation of experimental devices, and the need for complex control.

Scalability is easily accessible in some natural compounds with discrete degrees of freedom, such as the sequence controlled polymers~\cite{Lutz2013}.
Millions of bits can be encoded in a DNA sequence~\cite{Doricchi2022}, providing a robust degrees of freedom to manipulate.
Instead of using external control, utilizing the natural Hamiltonian of these systems for universal quantum computing is possible. One of the most promising schemes is the scattering based quantum computing~\cite{Childs2009,Childs2013}.
However, the scattering Hamiltonian corresponds to the adjacency matrix of an unweighted graph. Implementing one is extremely challenging in a natural compounds, since interaction strengths are naturally non-uniform.
For a long time, the study to scattering based quantum computing is limited to the theoretical models, such as the Fermi Hubbard model~\cite{bao2015universal} and Frenkel exciton model~\cite{yurke2010passive,yurke2023implementation,sup2024ballistic}.

The natural compounds can implement quantum operations. However, these operations usually have a random form due to the limited degrees of freedom.
Can we still build a universal quantum computer with natural compounds?
The answer is yes, but a roadmap is missing.
Almost any single arbitrary quantum gate with qubit number $>2$ is universal in computational power, while the only exceptions have a zero measure in population~\cite{Deutsch1995, Lloyd1995}.
The missing piece is an algorithm to efficiently compile a quantum circuit with a single arbitrary gate.
Most of the existing approaches for quantum compiling~\cite{Dawson2005,booth2018comparing,booth2018comparing,venturelli2018compiling,paler2023machine,khatri2019quantum}
are based on the assumption that the compiling gate set is symmetric, i.e., the inverse of each gate is also available. This is an overly strong assumption for the arbitrary gate based quantum computing framework.
The latest progress of inverse-free Solovay-Kitaev algorithm~\cite{Bouland2021} provides a ${\rm depth} \sim \log^{12}(\epsilon^{-1})$ algorithm that works for inverse free gate sets.
However, the polynomial order, $12$, of this method is too large for practical implementation.

Furthermore, we want to achieve fault-tolerance.
A logical qubit operation fidelity of $10^{-10}$ is expected to make quantum computing practically useful~\cite{Gidney2021}.
Quantum error correction is required to achieve this level of fidelity. However, the traditional quantum error correction methods are not suitable for the scattering based quantum computing framework.
The measure and feed forward error correction~\cite{beale2018quantum,shor1995scheme,steane1996error} requires external control during the computation process, which is not available in scattering based quantum computation.
Developing a measurement-free~\cite{perlin2023fault,heussen2024measurement,veroni2024optimized} quantum error correction (MF-QEC) method is crucial.

In this letter, we partially solve the compilation and error correction issues in arbitrary gate based quantum computing and show a clear path towards achieving universal quantum computing with natural compounds.
Our discussion is based on scattering based quantum computing framework, however the results are general and can be applied to other quantum computing frameworks.
In \Cref{sec:background}, we introduce scattering based quantum computing as a background of our problem. In \Cref{sec:compiling}, we show that any two-qubit gate can be compiled with a circuit depth logarithmic to gate infidelity. In \Cref{sec:qec}, we show that the compilation error can be further reduced by roughly two orders of magnitude with a MF-QEC method.

\section{Background: Scattering based quantum computing}\label{sec:background}
The scattering based quantum computing mentioned in this work is also known as the weighted version of the quantum walk based quantum computing.
Its framework is shown in \Cref{fig:main} (a).
(1) We first excite some synchronized fermions or bosons with a specific momentum $k$ from the input side, and quantum information is encoded in some internal degrees of freedom of these particles. (2) These particles with certain momentum propagate through chains and interact with others at the scattering centers denoted by $G_1, G_2,\ldots ,G_5$. In the transmission limit, the scattering process is equivalent to applying a unitary gate on the input state. (3) Detectors are placed on the output side to measure the outcomes.
A scattering center $G$ in the circuit is further decomposed to a sequence of elementary scattering centers (\elementgate) as show in \Cref{fig:main} (b).
The elementary scattering center implements an arbitrary two-qubit gate.
By flipping the scattering center horizontally and/or vertically, as shown in \Cref{fig:main} (c), four different variants of the gate can be generated.
By tuning the sequence of variants, any two-qubit gate can be approximated efficiently.
In the rest of this section, we will discuss the details of this framework.

\begin{figure}[h]
    \centering
    \includegraphics[width=0.9\columnwidth,trim={0 0 0 5cm},clip]{./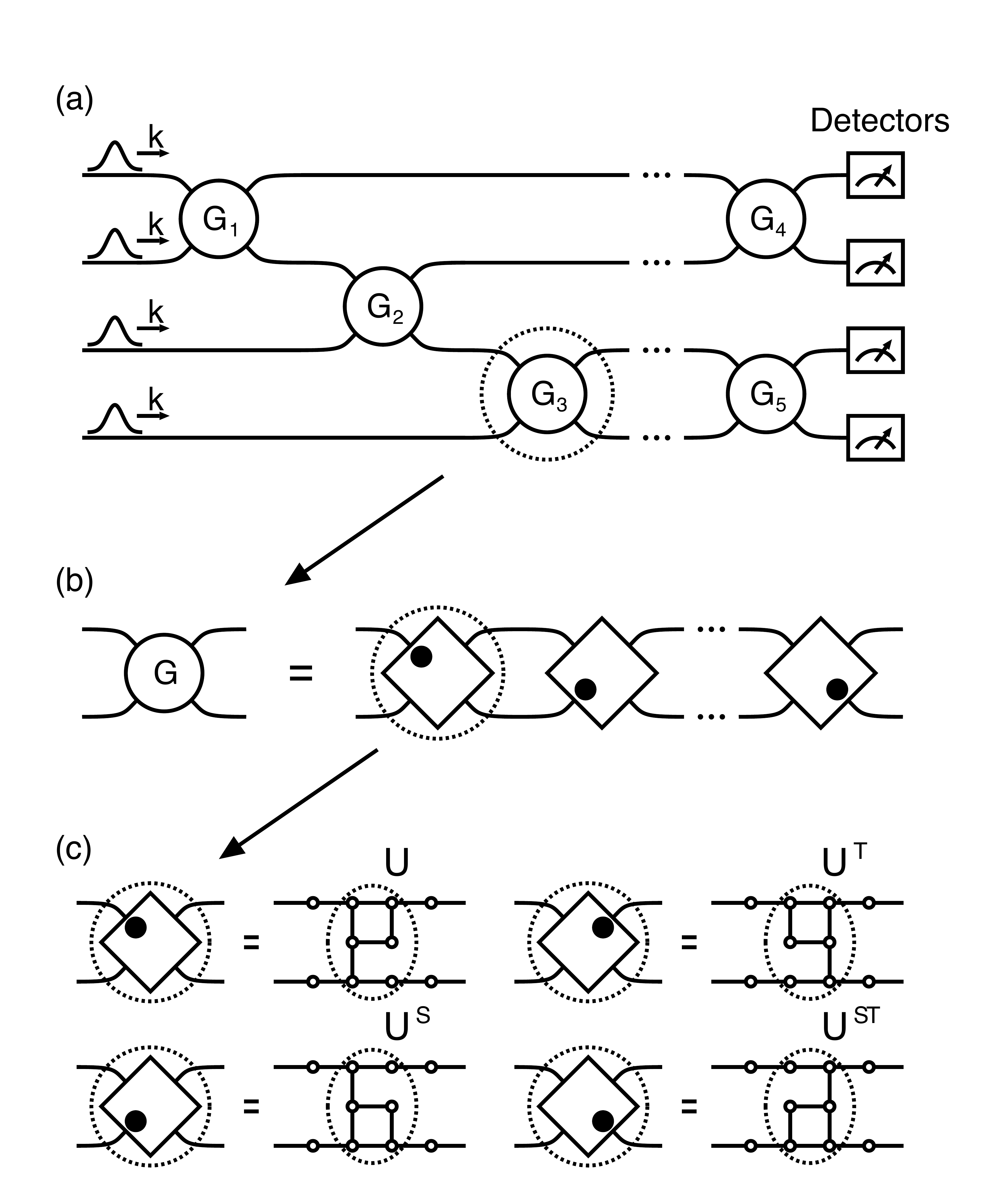}
    \caption{Scattering based quantum computing scheme. (a) The computation starts with exciting some (quasi-)particles with a specific momentum $k$ from the input channel. Those particles will propagate through the chains and interact with each other at the scattering centers denoted by $G_1,G_2, \ldots, G_5$. This scattering process is equivalent to applying a unitary on the input state. Detectors are placed on the end of output channels to measure the outcomes. (b) Each quantum gate is compiled with an elementary scattering center and its variants. (c) Given the elementary scattering center that implements a quantum gate $U$, three other variants can be generated by reflecting it horizontally or vertically.}\label{fig:main}
\end{figure}

\subsection{Implement Quantum Gates with Scattering Centers}
The scattering process of quasi-particles propagating through chains can be described by a scattering matrix. Given a system with $n$ semi-infinite chains (or channels) attached to a scattering center, the scattering matrix is denotes as $S \in \mathbb{C}^{n\times n}$, where $S_{i,j}$ is the probability amplitude of a particle scattering from channel $j$ into channel $i$.
Channels are grouped into inputs and outputs, then the scattering matrix has the following block-matrix form~\cite{texier2001scattering, varbanov2009quantum},
\begin{equation}
    S = \begin{pmatrix}
        S_{in,in} & S_{in,out} \\
        S_{out,in} & S_{out,out}
    \end{pmatrix}.
\end{equation}
Given a plane wave, $\psi_\alpha(x) = e^{ikx}|\alpha\rangle$ with some internal degree of freedom $\alpha$, entering from the input channel with momentum $k$, the wave function after scattering is a superposition of $e^{-ikx}S_{in,in}|\alpha\rangle$ in the input channel and $e^{ikx}S_{out,in}|\alpha\rangle$ in the output channel.
Running a circuit in scattering based quantum computing consists of multiple scattering processes. The scattering centers must be transparent, i.e., $S_{in,in} = S_{out,out} = 0$. Otherwise, wave amplitude may be damped every time the quasi-particles are scattered. 
By the conservation of probability, the scattering matrix is a unitary matrix, hence $S_{in,out}$ and $S_{out,in}$ are also unitary matrices,
i.e. the wave function in the input channels is scattered to the output channels through a unitary transformation.
Furthermore, a time reversal symmetric scattering process implies 
$S_{in,out} = S_{out,in}^T$~\cite{varbanov2009quantum}.
In the original scattering based scheme, the scattering centers for Hadamard gate, phase gate, basis changing gate, and $\sqrt{CZ}$ gate are designed, which are universal~\cite{Childs2009,Childs2013}.
By integrating the scattering centers into a quantum circuit, any quantum circuit can be implemented.

In the following, we consider a more general setup where the entries in a Hamiltonian can be random numbers and the Hamiltonian can not be mapped to the Laplacian of an unweighted graph. Therefore, 
it is unlikely to implement a specific quantum gate precisely.
We make the following presumptions:
(1). The gate implementable is a random one, rather than any specific one.
    However, the gate available can be very accurate such that the random error is negligible.
(2). Only one gate is available. We do not expect to find two structures that are simultaneously transparent for the same momentum.

\section{Efficient Compilation with a single arbitrary quantum gate}\label{sec:compiling}
Given the above restriction, we discuss how to compile a quantum circuit. Chen. et. al~\cite{chen2024one} show that a quantum circuit can be much more efficiently compiled with $\SU(4)$ gates than traditional finite gate set. In the following, we show that $\SU(4)$ gates can be efficiently compiled by stacking an arbitrary gate and its variants. 
\begin{definition}
    \label{definition:compilation-n-qubit}
    $\SU(N)$ Gate Compilation Problem
    
    Input: Given a set of $N\times N$ unitary matrices
    $\mathcal{U}=\{U_1,U_2,\dots,U_M\}$, a target unitary $\Gamma \in \SU(N)$, a depth $d$ and a
    precision $\epsilon>0$.
    
    Output:  A sequence of $d$ unitary matrices
    $U_{i_1},U_{i_2},\dots,U_{i_d}$ such that
    $\|U_{i_d}U_{i_{d-1}}\dots U_{i_1}- \Gamma\| <\epsilon$ if it exists and FALSE otherwise.
\end{definition}
$\|\cdot\|$ is the Frobenius norm and $\mathcal{U}$ is often assumed to be a universal gate set, which means any unitary matrix can be approximated by a sequence of unitary matrices in $\mathcal{U}$.
Finding a universal gate set turns out to be not as hard as it first appeared.
In the scattering based computing scheme, a single arbitrary $n$-qubit quantum gate can induce four different unitary matrices by flipping the scattering center horizontally and/or vertically as illustrated in \Cref{fig:main} (c).
By flipping the scattering center horizontally, we obtain $U^T$ under time reversal, which is the transpose of $U$.
By flipping the scattering center vertically, we obtain $U^S \equiv \SWAP \cdot U \cdot \SWAP$.
Hence, the four induced unitaries are
\begin{equation}\label{eq:four-unitaries}
    \mathcal{U} = \{U, U^S, U^{T}, U^{ST}\}
\end{equation}
and subset of which $\{U, U^S\}$ can already universally generate an arbitrary target gate~\cite{Deutsch1995,Lloyd1995} given the $U$ does not belong to the set of special unitaries that has a measure zero in the $\SU(N)$ manifold.
\paragraph{Why existing compiling methods are not good enough? --} 
Universality does not indicate that the compiling problem is easy~\cite{soeken2020boolean,botea2018complexity,ferrari2021deterministic}.
Deutsch et. al.~\cite{Deutsch1995} showed how to achieve a target gate precision $\epsilon_{F}$ with circuit depth $d\sim 1/\epsilon_{F}$ in the above setup. Here $\epsilon_F$ is the Frobenius norm of the difference between the target gate and the compiled gate.
The linearly increasing circuit depth with respect to $1/\epsilon_{F}$ is not acceptable in practice.
For a long time, the quantum computing community has been seeking for a universal gate set such as $\{H, T, \CNOT\}$ for better compilation of $\SU(4)$, where $H$ is the Hadamard gate, $T$ is the $\pi/8$ gate, and $\CNOT$ is the controlled NOT gate.
The most famous work is the Solovay-Kitaev algorithm~\cite{Dawson2005}, which is a recursive algorithm that can compile any quantum gate with a circuit depth of $\log^c(1/\epsilon_{F})$, where $c\approx 4$ is a constant. Later, this constant was improved to $c \approx 1.44$~\cite{Kuperberg2023}. Solovay-Kitaev algorithm requires the gate set to be symmetric, i.e., the inverse of each gate is also easily accessible, which the proposed universal gate set $\{H, T, \CNOT\}$ does satisfy.
However, in the scattering based quantum computing framework, the inverse of a gate is not available.
In the latest progress of inverse-free Solovay-Kitaev algorithm~\cite{Bouland2021},
researchers presented a polylog algorithm which also works for gate sets without inverse.
However, the polynomial order $c \approx 12$ for two-qubit gate compiling is too large to be practically useful.
As shown in \Cref{fig:scaling}, when we set the target gate error to $10^{-3}$ as indicated by the dashed horizontal line,
most polylog algorithms do not even outperform the linear algorithm.
A new compiling method that can achieve a certain precision with a reasonable circuit depth is needed.

\begin{figure}[t]
    \centering
    \includegraphics[width=0.9\columnwidth, trim={0.3cm 0cm 0cm 0cm}, clip]{./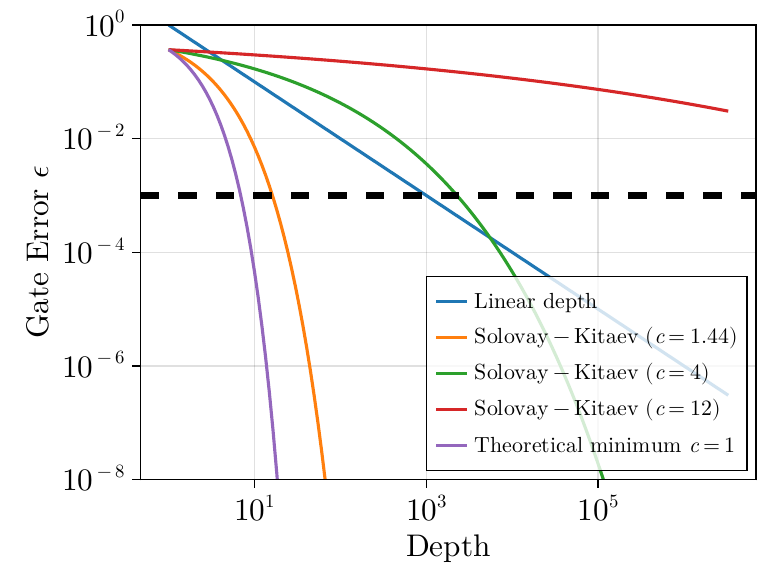}
    \caption{The circuit depth $d$ to achieve gate precision $\epsilon$ in different compiling methods, which includes the linear depth compiling from Ref.~\cite{Deutsch1995}, the Solovay-Kitaev's polylog depth compiling with polynomial order $c=4$ (inverse required)~\cite{Dawson2005} and its variant with $c=1.44$~\cite{Kuperberg2023}, and the inverse free Solovay-Kitaev algorithm with $c=12$~\cite{Bouland2021}.
    }\label{fig:scaling}
\end{figure}
\paragraph{Theoretical minimum circuit depth --}
Consider using the gate set in \Cref{eq:four-unitaries} to compile an arbitrary two-qubit gate. The compiling ability of a depth $d$ gate sequence can be measured by the \emph{mesh size}
\begin{equation}
    \epsilon_{F}^*(d) = \max_{\Gamma \in \SU(N)}\min_{D \in \mathcal{U}^d}\|D-\Gamma\|,
\end{equation}
which is determined by the hardest target gate to compile. The smaller the $\epsilon_F^*$ is, the more compiling ability the gate sequence has.
To achieve a target mesh size $\epsilon_F^*$, the theoretical minimum circuit depth scales as $d\approx A\log(1/\epsilon_{F}^*)$~\cite{harrow2002efficient} for some constant factor $A$, which can be achieved when the generated unitaries are uniformly distributed on the $\SU(N)$ manifold.
The linear log scaling is assured by the Lie group random walk theory~\cite{varju2012random}. As we present in \Cref{sec:appendix-gap}, the constant factor $A$ depends on the dimension $N$ of the unitary group and the spectral gap of the gate set $\mathcal{U}$. The larger the gap, the smaller $A$ is. And, it is believed the gap is non-zero for a universal gate set $\mathcal{U}$, providing a finite upper bound to $A$. 

The minimum possible $A$, $A^*$, is only relevant to the dimension $N$ and $|\mathcal{U}|$. To show this, we define $\epsilon^*_{F}$-ball of a unitary as the subset of $\SU(N)$ that containing all unitary matrices with distance less than $\epsilon^*_{F}$ to it.
Let the gate set generated by a depth $d$ gate sequence be $\mathcal{U}^d$, we have $|\mathcal{U}^d| = 4^d$. An arbitrary gate can be compiled to precision $\epsilon_F^*$ if the $\epsilon_F^*$-balls of all the generated unitaries can cover the whole $\SU(N)$ manifold.
Considering $\SU(N)$ as a $(N^2-1)$-dimensional manifold, the volume of the $\epsilon^*_{F}$-ball is equal to $ C {\epsilon^*_{F}}^{N^2-1}$ for some constant $C$.
To cover the whole $\SU(N)$, we have $|\mathcal{U}|^d C {\epsilon^*_{F}}^{N^2-1}\geq 1$, then we have
\begin{equation}
    d\geq (N^2-1)\log_{|\mathcal{U}|}\frac{1}{\epsilon^*_{F}}+\log_{|\mathcal{U}|} C,
\end{equation}
which is consistent with Ref.~\cite{harrow2002efficient}.

\begin{figure}[t]
    \centering
    \includegraphics[width=\columnwidth, trim={2cm 7cm 4cm 1.5cm}, clip]{./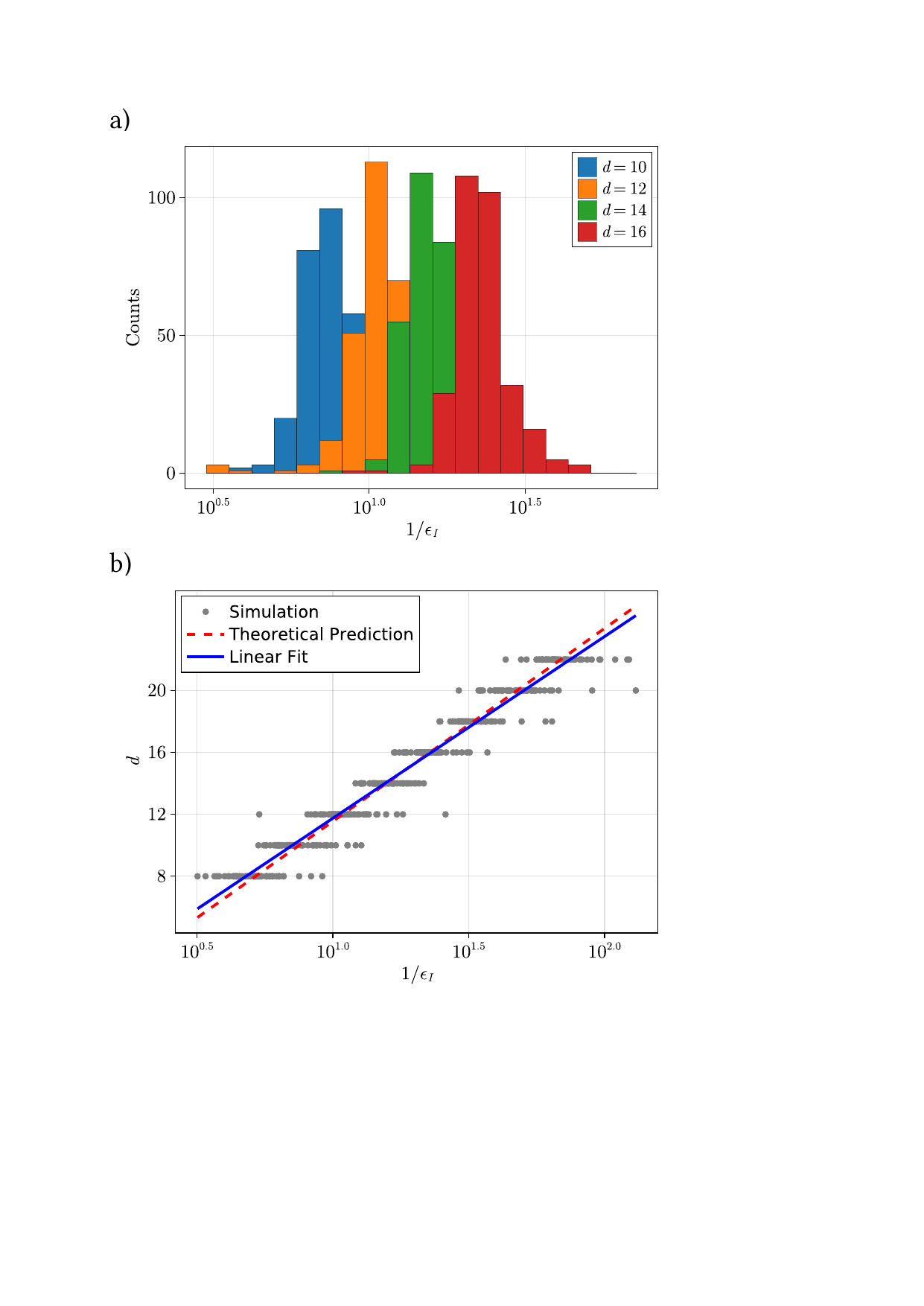}
    \caption{Compiling $\CNOT$ gate with one arbitrary gate and its variants in \Cref{eq:four-unitaries}. (a) Infidelity distribution of compilation with $300$ different arbitrary gates. The $x$-axis is the inverse of infidelity, $\epsilon_I$, and the $y$-axis is the number of gate sets with different infidelity. (b) The inverse of infidelity $1/\epsilon_I$ v.s. the circuit depth $d$. The gray dots are generated from $50$ numerical experiments. The blue line stands for the linear fitting of the experimental data. The red dashed line stands for the theoretical estimation in \Cref{eq:depth-epsilon}.} \label{fig:numericalresult}
\end{figure}

The Frobenius norm induced metric is not a perfect metric for quantum compilation since it does not consider the global phase factor.
In quantum mechanics, two unitaries that only differ by a global phase factor are considered to be the same.
In our numerical experiments, we use the operator infidelity $\epsilon_{I}(U_1,U_2) = 1 - \frac{\sqrt{Tr(U_1^{\dagger}U_2)^2}}{N}$ as a metric of compiling performance.
A similar bound for the circuit depth can be derived for the infidelity:
\begin{equation}\label{eq:depth-epsilon}
    \begin{aligned}
        d\sim &\frac{1}{2}(N^2-1)\log_{|\mathcal{U}|}\frac{1}{\epsilon^*_{I}}+\log_{|\mathcal{U}|} C'\\
        =& A_{I}^*\log_{10}\frac{1}{\epsilon^*_{I}}+B.
    \end{aligned}
\end{equation}
For our case $N=4,|\mathcal{U}|=4$, then we have the minimum prefactor $A^*_I=\frac{N^2-1}{2\log_{10}|\mathcal{U}|}\approx 12.46$ for the infidelity. This follows from the relation $\epsilon_{I}(U_1,U_2) \leq \frac{1}{2N} \epsilon_{F}(U_1,U_2)^2$, and its proof can be found in \Cref{sec:appendix-infvsfro}.

\subsection{Numerical Methods and Results}
\label{sec:num-meth-results}

In the following, we show the optimal prefactor $A^*_I$ in~\Cref{eq:depth-epsilon} is achievable for most $\mathcal{U}$ composed of a single arbitrary gate and its variants. The numerical method, detailed in~\Cref{sec:appendix-mim}, is based on the meet-in-the-middle compiling method which can achieve the theoretical optimal circuit depth.  

The results for $\CNOT$ gate compilation are shown in \Cref{fig:numericalresult}. We confirmed the results for other gates are similar, hence not shown here. For each circuit depth $d$, we compile the $\CNOT$ gate with different basis gate sets, $\{U,U^{T},U^{S},U^{ST}\}$, and record the compiling infidelity $\epsilon_I$. Figure (a) shows the distribution of infidelity for $300$ different basis gate sets. Both the worst and median infidelities decreases exponentially with the circuit depth.
Figrue (b) shows the linear fit to experimental data (blue line). The slope of which matches well with the theoretically prediction (red dashed line) with prefactor $A_I^* \approx 12.46$. Similar results are observed for systems with neither transpose operation nor the inverse operation, which is detailed in \Cref{sec:appendix-twou}. 

The maximum depth of the reported data is $d=22$ with our improved meet-in-the-middle algorithm. Each compilation takes $1.369$ minutes on a node with all $32$ cores of an Intel \textregistered~Xeon \textregistered~Platinum 8358P at 2.6 GHz. $31.157$ days was estimated to reach $10^{-3}$ infidelity with $36$ depth of compilation on a cluster with $8192$ nodes. The estimated resource is within the reach of current technology~\cite{haner20175}. 

\section{Measurement-free quantum error correction}\label{sec:qec}
The gate compilation precision $10^{-3}$ achievable by the improved meet-in-the-middle method is still far from implementing useful quantum algorithms.
In the following, we show the further improvement on how higher gate precision can be achieved by quantum error correction, by treating the imperfection of gates as coherent errors~\cite{iverson2020coherence}.
A MF-QEC ~\cite{heussen2024measurement,veroni2024optimized} is employed, since the mid-circuit measurement may not be feasible in systems with limited degrees of freedom such as the scattering based quantum computing that we are discussing.

As shown in \Cref{fig:mfmain}, quantum error correction mainly consists of three stages: \emph{encoding}, \emph{syndrome extraction}, and \emph{error correction}.
The main difference between the MF-QEC and the traditional QEC is on the \emph{error correction} stage. 
Those ancilla qubits with error syndrome information are processed by quantum operations, and the feedback is done through multi-control gates.
Although the MF-QEC has some disadvantages, such as requiring non-Clifford quantum operations (the multi-control gates), more complex quantum circuit as well as more errors chances, our numerical experiments shows that the MF-QEC can still be beneficial for the scattering based quantum computing framework.
We use the measurement-free Shor's code~\cite{veroni2024optimized} as an example to demonstrate the error correction process. The details of the error correction circuit is in \Cref{sec:appendix-mfqec}.

\begin{figure}
 
\tikzset{every picture/.style={line width=0.75pt}} 

\begin{tikzpicture}[x=0.75pt,y=0.75pt,yscale=-0.8,xscale=0.8]

\draw    (29,40) -- (57.5,40) ;
\draw   (110,32.83) .. controls (110,28.23) and (113.73,24.5) .. (118.33,24.5) -- (178.06,24.5) .. controls (182.66,24.5) and (186.39,28.23) .. (186.39,32.83) -- (186.39,86.17) .. controls (186.39,90.77) and (182.66,94.5) .. (178.06,94.5) -- (118.33,94.5) .. controls (113.73,94.5) and (110,90.77) .. (110,86.17) -- cycle ;
\draw    (30,82) -- (109.5,82) ;
\draw    (187,40) -- (322.5,40) ;
\draw    (187,82) -- (200,82) ;
\draw   (199.97,65.22) .. controls (199.97,62.92) and (201.84,61.06) .. (204.14,61.06) -- (230.8,61.06) .. controls (233.11,61.06) and (234.97,62.92) .. (234.97,65.22) -- (234.97,91.89) .. controls (234.97,94.19) and (233.11,96.06) .. (230.8,96.06) -- (204.14,96.06) .. controls (201.84,96.06) and (199.97,94.19) .. (199.97,91.89) -- cycle ;
\draw    (118.17,39.5) -- (118.17,69.83)(115.17,39.5) -- (115.17,69.83) ;
\draw [shift={(116.67,78.83)}, rotate = 270] [fill={rgb, 255:red, 0; green, 0; blue, 0 }  ][line width=0.08]  [draw opacity=0] (10.72,-5.15) -- (0,0) -- (10.72,5.15) -- (7.12,0) -- cycle    ;
\draw    (216.47,84.56) -- (224.86,71.44) ;
\draw [shift={(226.47,68.91)}, rotate = 122.58] [fill={rgb, 255:red, 0; green, 0; blue, 0 }  ][line width=0.08]  [draw opacity=0] (5.36,-2.57) -- (0,0) -- (5.36,2.57) -- cycle    ;
\draw  [draw opacity=0] (204.79,84) .. controls (206.13,80.31) and (211.6,77.54) .. (218.14,77.54) .. controls (225.33,77.54) and (231.22,80.88) .. (231.76,85.12) -- (218.14,85.76) -- cycle ; \draw   (204.79,84) .. controls (206.13,80.31) and (211.6,77.54) .. (218.14,77.54) .. controls (225.33,77.54) and (231.22,80.88) .. (231.76,85.12) ;  
\draw    (237.86,82) -- (255.14,82) ;
\draw [shift={(258.14,82)}, rotate = 180.6] [fill={rgb, 255:red, 0; green, 0; blue, 0 }  ][line width=0.08]  [draw opacity=0] (8.93,-4.29) -- (0,0) -- (8.93,4.29) -- cycle    ;
\draw   (58,28.29) .. controls (58,25.99) and (59.87,24.12) .. (62.17,24.12) -- (90.46,24.12) .. controls (92.76,24.12) and (94.63,25.99) .. (94.63,28.29) -- (94.63,54.95) .. controls (94.63,57.25) and (92.76,59.12) .. (90.46,59.12) -- (62.17,59.12) .. controls (59.87,59.12) and (58,57.25) .. (58,54.95) -- cycle ;
\draw    (94.5,40) -- (109.5,40) ;
\draw   (323,26.17) .. controls (323,23.87) and (324.87,22) .. (327.17,22) -- (359.23,22) .. controls (361.53,22) and (363.39,23.87) .. (363.39,26.17) -- (363.39,52.83) .. controls (363.39,55.13) and (361.53,57) .. (359.23,57) -- (327.17,57) .. controls (324.87,57) and (323,55.13) .. (323,52.83) -- cycle ;
\draw    (343.5,82) -- (343.5,64.58) ;
\draw [shift={(343.5,61.58)}, rotate = 90] [fill={rgb, 255:red, 0; green, 0; blue, 0 }  ][line width=0.08]  [draw opacity=0] (8.93,-4.29) -- (0,0) -- (8.93,4.29) -- cycle    ;
\draw    (325.6,82) -- (343.5,82) ;
\draw    (35.71,87.75) -- (44.57,77.86) ;
\draw    (35.71,46.75) -- (44.57,36.86) ;
\draw    (188.67,211) -- (210.33,211) ;
\draw   (210.67,200.28) .. controls (210.67,197.98) and (212.53,196.11) .. (214.83,196.11) -- (248.7,196.11) .. controls (251,196.11) and (252.86,197.98) .. (252.86,200.28) -- (252.86,226.94) .. controls (252.86,229.24) and (251,231.11) .. (248.7,231.11) -- (214.83,231.11) .. controls (212.53,231.11) and (210.67,229.24) .. (210.67,226.94) -- cycle ;
\draw    (253.67,211) -- (269.67,211) ;
\draw  [fill={rgb, 255:red, 0; green, 0; blue, 0 }  ,fill opacity=1 ] (227.4,253.78) .. controls (227.4,251.29) and (229.41,249.28) .. (231.9,249.28) .. controls (234.39,249.28) and (236.4,251.29) .. (236.4,253.78) .. controls (236.4,256.26) and (234.39,258.28) .. (231.9,258.28) .. controls (229.41,258.28) and (227.4,256.26) .. (227.4,253.78) -- cycle ;
\draw    (231.9,253.78) -- (231.9,231) ;
\draw   [shift={(-3.5,0)}](365.8,190.68) .. controls (365.77,186.01) and (363.43,183.69) .. (358.76,183.71) -- (299.76,184.02) .. controls (293.09,184.05) and (289.75,181.74) .. (289.73,177.07) .. controls (289.75,181.74) and (286.43,184.09) .. (279.76,184.13)(282.76,184.11) -- (220.76,184.43) .. controls (216.09,184.46) and (213.77,186.8) .. (213.8,191.47) ;
\draw   (318.07,200.28) .. controls (318.07,197.98) and (319.93,196.11) .. (322.23,196.11) -- (359.36,196.11) .. controls (361.66,196.11) and (363.53,197.98) .. (363.53,200.28) -- (363.53,226.94) .. controls (363.53,229.24) and (361.66,231.11) .. (359.36,231.11) -- (322.23,231.11) .. controls (319.93,231.11) and (318.07,229.24) .. (318.07,226.94) -- cycle ;
\draw  [fill={rgb, 255:red, 0; green, 0; blue, 0 }  ,fill opacity=1 ] (335.7,253.78) .. controls (335.7,251.29) and (337.71,249.28) .. (340.2,249.28) .. controls (342.69,249.28) and (344.7,251.29) .. (344.7,253.78) .. controls (344.7,256.26) and (342.69,258.28) .. (340.2,258.28) .. controls (337.71,258.28) and (335.7,256.26) .. (335.7,253.78) -- cycle ;
\draw    (340.2,253.78) -- (340.2,231) ;
\draw    (302.47,213.81) -- (317.47,213.9) ;
\draw  [fill={rgb, 255:red, 0; green, 0; blue, 0 }  ,fill opacity=1 ] (278.4,253.5) .. controls (278.4,252.67) and (279.07,252) .. (279.9,252) .. controls (280.73,252) and (281.4,252.67) .. (281.4,253.5) .. controls (281.4,254.33) and (280.73,255) .. (279.9,255) .. controls (279.07,255) and (278.4,254.33) .. (278.4,253.5) -- cycle ;
\draw  [fill={rgb, 255:red, 0; green, 0; blue, 0 }  ,fill opacity=1 ] (291.73,253.5) .. controls (291.73,252.67) and (292.4,252) .. (293.23,252) .. controls (294.06,252) and (294.73,252.67) .. (294.73,253.5) .. controls (294.73,254.33) and (294.06,255) .. (293.23,255) .. controls (292.4,255) and (291.73,254.33) .. (291.73,253.5) -- cycle ;
\draw  [fill={rgb, 255:red, 0; green, 0; blue, 0 }  ,fill opacity=1 ] (285,253.5) .. controls (285,252.67) and (285.67,252) .. (286.5,252) .. controls (287.33,252) and (288,252.67) .. (288,253.5) .. controls (288,254.33) and (287.33,255) .. (286.5,255) .. controls (285.67,255) and (285,254.33) .. (285,253.5) -- cycle ;

\draw  [fill={rgb, 255:red, 0; green, 0; blue, 0 }  ,fill opacity=1 ] (278.4,213.33) .. controls (278.4,212.5) and (279.07,211.83) .. (279.9,211.83) .. controls (280.73,211.83) and (281.4,212.5) .. (281.4,213.33) .. controls (281.4,214.16) and (280.73,214.83) .. (279.9,214.83) .. controls (279.07,214.83) and (278.4,214.16) .. (278.4,213.33) -- cycle ;
\draw  [fill={rgb, 255:red, 0; green, 0; blue, 0 }  ,fill opacity=1 ] (291.73,213.33) .. controls (291.73,212.5) and (292.4,211.83) .. (293.23,211.83) .. controls (294.06,211.83) and (294.73,212.5) .. (294.73,213.33) .. controls (294.73,214.16) and (294.06,214.83) .. (293.23,214.83) .. controls (292.4,214.83) and (291.73,214.16) .. (291.73,213.33) -- cycle ;
\draw  [fill={rgb, 255:red, 0; green, 0; blue, 0 }  ,fill opacity=1 ] (285,213.33) .. controls (285,212.5) and (285.67,211.83) .. (286.5,211.83) .. controls (287.33,211.83) and (288,212.5) .. (288,213.33) .. controls (288,214.16) and (287.33,214.83) .. (286.5,214.83) .. controls (285.67,214.83) and (285,214.16) .. (285,213.33) -- cycle ;

\draw    (268.7,254) -- (189,254) ;
\draw    (363.83,254) -- (301.33,254) ;

\draw  [fill={rgb, 255:red, 255; green, 255; blue, 255 }  ,fill opacity=1 ] (282.49,67.48) .. controls (282.49,64.48) and (284.92,62.04) .. (287.92,62.04) -- (312.98,62.04) .. controls (315.99,62.04) and (318.42,64.48) .. (318.42,67.48) -- (318.42,83.79) .. controls (318.42,86.8) and (315.99,89.23) .. (312.98,89.23) -- (287.92,89.23) .. controls (284.92,89.23) and (282.49,86.8) .. (282.49,83.79) -- cycle ;
\draw  [fill={rgb, 255:red, 0; green, 0; blue, 0 }  ,fill opacity=1 ] (286.46,66.18) -- (314.45,66.18) -- (314.45,85.1) -- (286.46,85.1) -- cycle ;
\draw  [fill={rgb, 255:red, 255; green, 255; blue, 255 }  ,fill opacity=1 ] (292.03,94.32) -- (293.6,89.07) -- (307.94,89.07) -- (309.52,94.32) -- cycle ;
\draw  [color={rgb, 255:red, 0; green, 0; blue, 0 }  ,draw opacity=1 ][fill={rgb, 255:red, 255; green, 255; blue, 255 }  ,fill opacity=1 ] (288.31,92.92) .. controls (288.31,92.55) and (288.6,92.25) .. (288.97,92.25) -- (312.98,92.25) .. controls (313.35,92.25) and (313.65,92.55) .. (313.65,92.92) -- (313.65,94.92) .. controls (313.65,95.29) and (313.35,95.59) .. (312.98,95.59) -- (288.97,95.59) .. controls (288.6,95.59) and (288.31,95.29) .. (288.31,94.92) -- cycle ;
\draw   (266.27,64.97) .. controls (266.27,63.35) and (267.58,62.04) .. (269.19,62.04) -- (277.97,62.04) .. controls (279.59,62.04) and (280.9,63.35) .. (280.9,64.97) -- (280.9,92.7) .. controls (280.9,94.31) and (279.59,95.62) .. (277.97,95.62) -- (269.19,95.62) .. controls (267.58,95.62) and (266.27,94.31) .. (266.27,92.7) -- cycle ;
\draw  [shift={(0.15,0)}] [fill={rgb, 255:red, 0; green, 0; blue, 0 }  ,fill opacity=1 ] (267.54,65.53) .. controls (267.54,65.35) and (267.68,65.21) .. (267.86,65.21) -- (278.99,65.21) .. controls (279.16,65.21) and (279.31,65.35) .. (279.31,65.53) -- (279.31,66.48) .. controls (279.31,66.66) and (279.16,66.8) .. (278.99,66.8) -- (267.86,66.8) .. controls (267.68,66.8) and (267.54,66.66) .. (267.54,66.48) -- cycle ;
\draw  [shift={(0.15,0)}] [fill={rgb, 255:red, 0; green, 0; blue, 0 }  ,fill opacity=1 ] (267.54,69.52) .. controls (267.54,69.34) and (267.68,69.2) .. (267.86,69.2) -- (278.99,69.2) .. controls (279.16,69.2) and (279.31,69.34) .. (279.31,69.52) -- (279.31,70.47) .. controls (279.31,70.64) and (279.16,70.79) .. (278.99,70.79) -- (267.86,70.79) .. controls (267.68,70.79) and (267.54,70.64) .. (267.54,70.47) -- cycle ;
\draw  [fill={rgb, 255:red, 0; green, 0; blue, 0 }  ,fill opacity=1 ] (269.13,81.22) .. controls (269.13,81.08) and (269.24,80.96) .. (269.38,80.96) -- (277.78,80.96) .. controls (277.92,80.96) and (278.03,81.08) .. (278.03,81.22) -- (278.03,81.98) .. controls (278.03,82.12) and (277.92,82.24) .. (277.78,82.24) -- (269.38,82.24) .. controls (269.24,82.24) and (269.13,82.12) .. (269.13,81.98) -- cycle ;
\draw  [fill={rgb, 255:red, 0; green, 0; blue, 0 }  ,fill opacity=1 ] (272.15,86.29) .. controls (272.15,85.54) and (272.76,84.94) .. (273.5,84.94) .. controls (274.25,84.94) and (274.85,85.54) .. (274.85,86.29) .. controls (274.85,87.04) and (274.25,87.64) .. (273.5,87.64) .. controls (272.76,87.64) and (272.15,87.04) .. (272.15,86.29) -- cycle ;

\draw    (32.17,211) -- (59.67,211) ;
\draw   (112.17,202.39) .. controls (112.17,197.79) and (115.9,194.06) .. (120.5,194.06) -- (180.23,194.06) .. controls (184.83,194.06) and (188.56,197.79) .. (188.56,202.39) -- (188.56,255.73) .. controls (188.56,260.33) and (184.83,264.06) .. (180.23,264.06) -- (120.5,264.06) .. controls (115.9,264.06) and (112.17,260.33) .. (112.17,255.73) -- cycle ;
\draw    (32.17,254) -- (111.67,254) ;
\draw    (120.33,209.06) -- (120.33,239.39)(117.33,209.06) -- (117.33,239.39) ;
\draw [shift={(118.83,248.39)}, rotate = 270] [fill={rgb, 255:red, 0; green, 0; blue, 0 }  ][line width=0.08]  [draw opacity=0] (10.72,-5.15) -- (0,0) -- (10.72,5.15) -- (7.12,0) -- cycle    ;
\draw   (60.17,197.85) .. controls (60.17,195.55) and (62.03,193.68) .. (64.33,193.68) -- (92.63,193.68) .. controls (94.93,193.68) and (96.79,195.55) .. (96.79,197.85) -- (96.79,224.51) .. controls (96.79,226.82) and (94.93,228.68) .. (92.63,228.68) -- (64.33,228.68) .. controls (62.03,228.68) and (60.17,226.82) .. (60.17,224.51) -- cycle ;
\draw    (96.67,211) -- (111.67,211) ;
\draw    (37.88,257.31) -- (46.74,247.42) ;
\draw    (37.88,216.31) -- (46.74,206.42) ;

\draw (3,-5) node [anchor=north west][inner sep=0.75pt]  [font=\large] [align=left] {(a)};
\draw (2,32) node [anchor=north west][inner sep=0.75pt]    {$|\psi \rangle $};
\draw (2,71) node [anchor=north west][inner sep=0.75pt]    {$|0\rangle ^{\otimes l}$};
\draw (121,41.83) node [anchor=north west][inner sep=0.75pt]   [align=left] {\scriptsize Syndrome\\ \scriptsize Extraction};
\draw (58.5,35) node [anchor=north west][inner sep=0.75pt]   [align=left] {\scriptsize Error};
\draw (336,33) node [anchor=north west][inner sep=0.75pt]   [align=left] {\scriptsize $F$};
\draw (44.14,65.86) node [anchor=north west][inner sep=0.75pt]    {$l$};
\draw (43.34,25.06) node [anchor=north west][inner sep=0.75pt]    {$n$};
\draw (223,206) node [anchor=north west][inner sep=0.75pt]   [align=left] {\scriptsize $F _1$};
\draw (3,160) node [anchor=north west][inner sep=0.75pt]  [font=\large] [align=left] {(b)};
\draw (230,101) node [anchor=north west][inner sep=0.75pt]   [align=left] {Classical Computer};
\draw (330,206) node [anchor=north west][inner sep=0.75pt]   [align=left] {\scriptsize $F _{m}$};
\draw (210,154.33) node [anchor=north west][inner sep=0.75pt]   [align=left] {Controlled Corrections};
\draw (2,201.56) node [anchor=north west][inner sep=0.75pt]    {$|\psi \rangle $};
\draw (2,240) node [anchor=north west][inner sep=0.75pt]    {$|0\rangle ^{\otimes l}$};
\draw (123.17,211.39) node [anchor=north west][inner sep=0.75pt]   [align=left] {\scriptsize Syndrome\\ \scriptsize Extraction};
\draw (60.5,204.23) node [anchor=north west][inner sep=0.75pt]   [align=left] {\scriptsize Error};
\draw (46.31,235.42) node [anchor=north west][inner sep=0.75pt]    {$l$};
\draw (45.51,194.62) node [anchor=north west][inner sep=0.75pt]    {$n$};

\end{tikzpicture}
\caption{The (a) traditional and (b) measurement-free quantum error correction protocol. Instead of analyzing the error syndromes with a classical device and sending the corresponding correction operation $F$ back, the measurement-free QEC detects all possible error types with a quantum circuit and sends the error correction signals through controlled gates. $F_{k}$ is applied only if the $k$-th signal is true. \Cref{sec:appendix-mfqec} shows an example of measurement-free QEC with 3-qubit repetition code.}\label{fig:mfmain}
\end{figure}
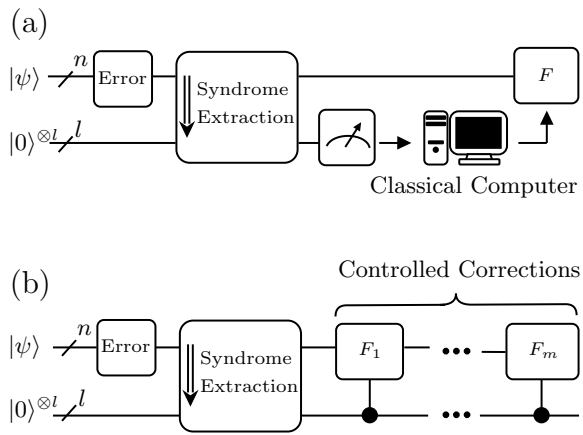

The Shor code is a nine-qubit CSS code~\cite{shor1995scheme}, working by concatenating each qubit of a phase-flip with a bit-flip repetition code. The bit-flip stabilizers are $\{Z_iZ_j \mid (i,j) \in \{(1, 2), (2, 3), (4, 5), (5, 6), (7, 8), (8, 9)\}\}$ and the phase-flip stabilizers for the second layer encoding are $\overline{X_1}\overline{X_2}$ and $\overline{X_2}\overline{X_3}$. Here $\overline{X}$ stands for the logical $X$ operator acting on the logical qubits. The two Shor code states are
$$
\begin{aligned}
    |\overline{0}\rangle &= |\overline{+}\rangle _b^{\otimes 3}&= \frac{1}{2\sqrt{2}}(|000\rangle + |111\rangle)^{\otimes 3},\\
    |\overline{1}\rangle &= |\overline{-}\rangle _b^{\otimes 3}&= \frac{1}{2\sqrt{2}}(|000\rangle - |111\rangle)^{\otimes 3}.
\end{aligned}
$$

Comparing with the traditional QEC, the MF-QEC protocol requires $4$ extra qubits and $12$ multi-control gates.
The $4$ extra qubits are used to store the error syndromes of $4$ extra stabilizers that are linearly dependent with the original stabilizers, which are later used for controlled correction.
The $12$ multi-control gates are for correcting errors, which may cause more error to the logical qubits.
The threshold of the MF-QEC protocal may increase or decrease depending on the error model of the physical platform.
If the error of the gate is much smaller than the error of the measurement, the threshold of the MF-QEC may be higher than that of the traditional QEC.
To analyze the error correction performance, we repeatedly simulate the Shor code circuit with different gate compilation errors.
Only single qubit errors are considered since the Shor code has a code distance of $3$, which can only correct single qubit errors.
The test circuit consists of two (logical) $X$ gates followed by the error correction circuit. By including all ancilla qubits, the total number of qubits is $21$.
Since coherent error gates are no longer Clifford gates, efficient classical simulation is not available. We have to resort to the full-amplitude simulation, which is perform with the Julia~\cite{Bezanson2017} package \texttt{Yao.jl}~\cite{Luo2020}.
The results are shown in \Cref{fig:qec_res}. We randomly generate $10^4$ samples with gate infidelities from $10^{-6}$ to $0.5$, and compare the error probabilities of different error correction protocals.
Three different protocals are considered: directly operating on physical qubits, operating logical qubits without error correction, and operating logical qubits followed by the MF-QEC.
The error probability distribution indicates an upper linear bound on the log-log scale figure.
The MF-QEC (green line) significantly reduces the upper bound of error probability.
With a gate infidelity $10^{-3}$, the error probability after a single round of MF-QEC is only $2\times 10^{-5}$ (black horizontal line), which is much lower than those without QEC.
Hence, we conclude that the MF-QEC protocol can effectively reduce the compilation errors on single qubits and improve the gate precision.
\begin{figure}
    \centering
    \includegraphics[width=0.95\columnwidth]{./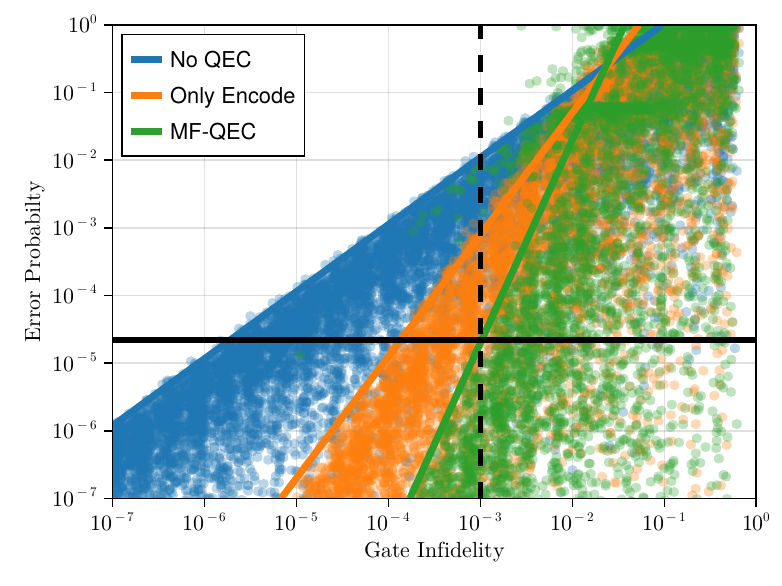}
    \caption{The error probability versus gate infidelity.
    Three different colors represent different QEC protocals: directly operating physical qubits (No QEC), applying logical $X$ gates on Shor code without correction (Only Encode), and MF-QEC on Shor code (MF-QEC). Each category contains $10^4$ instances (dots).
    As indicated by the black lines, the error probability for a $10^{-3}$ gate infidelity is reduced to $2\times 10^{-5}$ by the MF-QEC.}
    \label{fig:qec_res}
\end{figure}

\section{Discussion}
In this letter, we proposed a scheme for utilizing a single arbitrary gate to achieve fault-tolerant quantum computation.
An improved brute-force search algorithm is proposed to find the optimal compiling sequence for a given gate.
With this scheme, the gate precision of $10^{-3}$ is achievable with a circuit depth of $\sim 36$.
The remaining compilation error is then treated as coherent error, and can be efficiently corrected by the MF-QEC protocol.
These results open a door for utilizing natural compounds for quantum computing.
While showing great potential, our work also raises multiple challenges for future research of single arbitrary gate universal quantum computing.
The first one is how to concatenate the error correction code for finally achieving fault tolerance. The current scheme contains some non-transversal gates, which makes the concatenation challenging. 
The second one is how to handle multi-qubit errors in the MF-QEC protocol.

\begin{acknowledgments}
    The authors thank Xun Gao for sharing his valuable insights with us, Jian-Xin Chen, Xiong-Feng Ma, Bei Zeng, Yi-Nan Li and Dong Liu for helpful discussion on quantum circuit compilation and quantum error correction and Xia-Kun Chu, Long-Li Zheng and Yun-Fan Qiu for helpful discussion on electron transportation properties in polymers.
    This work is partially funded by the National Key R\&D Program of China (Grant No. 2024YFE0102500), National Natural Science Foundation of China (No. 12404568) and the Guangzhou Municipal Science and Technology Project (No. 2023A03J00904).
\end{acknowledgments}

\bibliography{refs}

\onecolumngrid
\newpage
\appendix

\section{Linear Relationship between Circuit Depth and Log-inverse Gate Imprecision}
\label{sec:appendix-gap}
This appendix mainly introduces the linear relationship between the circuit depth and the log-inverse gate imprecision $\log_{10}(1/\epsilon)$ theoretically. To avoid the conflict of notation, we will use $l$ to denote the circuit depth in this appendix. The main references of the gap theories are ~\cite{slowik2023calculable,varju2012random}.

\subsection{Spectral gap}
Let $G$ be a semi-simple compact Lie group with a uniform probability measure $\mu$ and a bi-invariant metric $d$. In our situation, $G$ is the unitary group $\SU(4)$,
$\mu$ is the Haar measure and $d$ is the metric induced by the Frobenius norm. Let $\mathcal{U}$ be a finite generating set of $G$. We suppose that the identity element $I$ is in $\mathcal{U}$.
All possible unitaries generated by a length $l$ quantum circuit are denoted as $\mathcal{U}^l$, which is the $l$-length product set of $\mathcal{U}$.
$$
\mathcal{U}^l=\{g_1g_2...g_l:g_i \in \mathcal{U},i=1,2,\dots,l\}.
$$
To estimate the coverage of $\mathcal{U}^l$ on $G$, we place a ``ball'' centered at each element in $\mathcal{U}^l$ with radius $r$ and check weather the union of these balls covers $G$.
Here, the ``ball'' is defined as the neighborhood of an element $g^* \in G$ is $B_r(g^*)=\{g\in G: d(g,g^*)<r \}$ with a radius $r$. If the union of these balls with radius $\epsilon$ covers $G$, we say that $\mathcal{U}^l$ is an $\epsilon$-net of $G$. 

In the main text, we calculate the volume of the balls to get a lower bound of the number of the balls, which is also a lower bound of the circuit depth $l$. The volume of $B_r(g^*)$ is $V(B_r(g^*))=C_V\cdot r^{\dim(G)}$, where $C_V$ is a constant depending on the group $G$ and the dimension of $G$ is $\dim(G)=\dim(\SU(4))= 15$ .
Here, we want to get an upper bound by comparing the difference between the uniform measure and the measure endowed by $\mathcal{U}^l$.
We use $\nu_{\mathcal{U}^l}$ to denote the discrete measure endowed by $\mathcal{U}^l$, i.e. 
$$
\nu_{\mathcal{U}^l}(g) =\left\{\begin{aligned}
    \frac{1}{\lvert \mathcal{U}^l\rvert} &\hspace*{1cm} g\in \mathcal{U}^l;\\
    0 & \hspace*{1cm} g \notin \mathcal{U}^l.
    \end{aligned}\right. 
$$
To compare the difference between a continuous measure $\mu$ and a discrete measure $\nu$, we need a metric about the closeness of two measures. We find out that the spectral gap is a good metric and it can be used to bound $\epsilon$ further. The spectral gap is defined on an operator norm of operators that act on the function space $L^2(G)$, which is the space of all the square integrable complex functions on $G$ with respect to $\mu$. 
$$
L^2(G) = \{f:G \rightarrow \mathbb{C} \text{ s.t. } \int_G \lvert f(g) \rvert^2 d\mu(g) < \infty \}.
$$
The inner product on $L^2(G)$ is defined as 
$$
\langle f_1,f_2 \rangle=\int_G f_1(g)^*f_2(g)d\mu(g),
$$
which is consistent with the 2-norm, i.e. $\langle f,f \rangle=\lVert f \rVert^2_2$. For any $g \in G$, we define the translation operator on $L^2(G)$ by 
$$
T_g(f)(h)=f(g^{-1}h).
$$
We define $\mathcal{T}_\nu$ as a linear operator on $L^2(G)$, which takes a function $f$ to its translational average with respect to $\nu$:
\begin{equation*}\label{eq:translation-average}
\begin{aligned}
    \mathcal{T}_{\nu } (f)(h) & = \sum _{g\in G} \nu (g)T_{g}(f)(h) =\sum _{g\in G} \nu (g)f(g^{-1} h) & \text{(discrete)}\\
     & =\int_G f(g^{-1} h)d\nu ( g) & \text{(continuous)}.
    \end{aligned}    
\end{equation*}

\Cref{fig:appendix-translation} shows the function defined on the group $G$ and how the translation and average operator work on it.
\begin{figure}
\tikzset{every picture/.style={line width=0.75pt}} 

\centerline{
\begin{tikzpicture}[x=0.75pt,y=0.75pt,yscale=-1,xscale=1]

\draw  [draw opacity=0][fill={rgb, 255:red, 80; green, 227; blue, 194 }  ,fill opacity=1 ][line width=1.5]  (540.41,63.74) .. controls (555.75,63.68) and (568.34,69.04) .. (573.88,79.54) .. controls (583.86,98.45) and (567.2,126.83) .. (536.67,142.94) -- (518.61,108.71) -- cycle ; \draw  [color={rgb, 255:red, 74; green, 144; blue, 226 }  ,draw opacity=1 ][line width=1.5]  (540.41,63.74) .. controls (555.75,63.68) and (568.34,69.04) .. (573.88,79.54) .. controls (583.86,98.45) and (567.2,126.83) .. (536.67,142.94) ;  
\draw  [draw opacity=0][fill={rgb, 255:red, 80; green, 227; blue, 194 }  ,fill opacity=1 ][line width=1.5]  (479,113.02) .. controls (479,113.02) and (479,113.02) .. (479,113.02) .. controls (479,78.5) and (496.33,50.52) .. (517.7,50.52) .. controls (526.71,50.52) and (534.99,55.49) .. (541.57,63.82) -- (517.7,113.02) -- cycle ; \draw  [color={rgb, 255:red, 74; green, 144; blue, 226 }  ,draw opacity=1 ][line width=1.5]  (479,113.02) .. controls (479,113.02) and (479,113.02) .. (479,113.02) .. controls (479,78.5) and (496.33,50.52) .. (517.7,50.52) .. controls (526.71,50.52) and (534.99,55.49) .. (541.57,63.82) ;  
\draw  [draw opacity=0][fill={rgb, 255:red, 80; green, 227; blue, 194 }  ,fill opacity=1 ][line width=1.5]  (207.1,30.7) .. controls (228.47,30.7) and (245.8,67.14) .. (245.8,112.08) .. controls (245.8,89.44) and (228.47,71.1) .. (207.1,71.1) .. controls (185.73,71.1) and (168.4,89.44) .. (168.4,112.08) .. controls (168.4,67.14) and (185.73,30.7) .. (207.1,30.7) -- cycle ;
\draw   (168.4,109.7) .. controls (168.4,88.33) and (185.73,71) .. (207.1,71) .. controls (228.47,71) and (245.8,88.33) .. (245.8,109.7) .. controls (245.8,131.07) and (228.47,148.4) .. (207.1,148.4) .. controls (185.73,148.4) and (168.4,131.07) .. (168.4,109.7) -- cycle ;
\draw  [fill={rgb, 255:red, 0; green, 0; blue, 0 }  ,fill opacity=1 ] (204.6,71.5) .. controls (204.6,70.12) and (205.72,69) .. (207.1,69) .. controls (208.48,69) and (209.6,70.12) .. (209.6,71.5) .. controls (209.6,72.88) and (208.48,74) .. (207.1,74) .. controls (205.72,74) and (204.6,72.88) .. (204.6,71.5) -- cycle ;

\draw  [draw opacity=0][line width=1.5]  (168,109.7) .. controls (168,109.7) and (168,109.7) .. (168,109.7) .. controls (168,65.83) and (185.33,30.27) .. (206.7,30.27) .. controls (228.07,30.27) and (245.4,65.83) .. (245.4,109.7) -- (206.7,109.7) -- cycle ; \draw  [color={rgb, 255:red, 74; green, 144; blue, 226 }  ,draw opacity=1 ][line width=1.5]  (168,109.7) .. controls (168,109.7) and (168,109.7) .. (168,109.7) .. controls (168,65.83) and (185.33,30.27) .. (206.7,30.27) .. controls (228.07,30.27) and (245.4,65.83) .. (245.4,109.7) ;  
\draw  [draw opacity=0][line width=1.5]  (245.14,105.71) .. controls (245.31,107.18) and (245.4,108.68) .. (245.4,110.2) .. controls (245.4,131.57) and (228.07,148.9) .. (206.7,148.9) .. controls (185.32,148.9) and (168,131.57) .. (168,110.2) .. controls (168,110.03) and (168,109.87) .. (168,109.7) -- (206.7,110.2) -- cycle ; \draw  [color={rgb, 255:red, 74; green, 144; blue, 226 }  ,draw opacity=1 ][line width=1.5]  (245.14,105.71) .. controls (245.31,107.18) and (245.4,108.68) .. (245.4,110.2) .. controls (245.4,131.57) and (228.07,148.9) .. (206.7,148.9) .. controls (185.32,148.9) and (168,131.57) .. (168,110.2) .. controls (168,110.03) and (168,109.87) .. (168,109.7) ;  

\draw  [fill={rgb, 255:red, 255; green, 255; blue, 255 }  ,fill opacity=1 ] (479,109.7) .. controls (479,88.33) and (496.33,71) .. (517.7,71) .. controls (539.07,71) and (556.4,88.33) .. (556.4,109.7) .. controls (556.4,131.07) and (539.07,148.4) .. (517.7,148.4) .. controls (496.33,148.4) and (479,131.07) .. (479,109.7) -- cycle ;
\draw  [fill={rgb, 255:red, 0; green, 0; blue, 0 }  ,fill opacity=1 ] (515.2,71.5) .. controls (515.2,70.12) and (516.32,69) .. (517.7,69) .. controls (519.08,69) and (520.2,70.12) .. (520.2,71.5) .. controls (520.2,72.88) and (519.08,74) .. (517.7,74) .. controls (516.32,74) and (515.2,72.88) .. (515.2,71.5) -- cycle ;
\draw  [fill={rgb, 255:red, 0; green, 0; blue, 0 }  ,fill opacity=1 ] (550.2,93.5) .. controls (550.2,92.12) and (551.32,91) .. (552.7,91) .. controls (554.08,91) and (555.2,92.12) .. (555.2,93.5) .. controls (555.2,94.88) and (554.08,96) .. (552.7,96) .. controls (551.32,96) and (550.2,94.88) .. (550.2,93.5) -- cycle ;
\draw   (30,109.7) .. controls (30,88.33) and (47.33,71) .. (68.7,71) .. controls (90.07,71) and (107.4,88.33) .. (107.4,109.7) .. controls (107.4,131.07) and (90.07,148.4) .. (68.7,148.4) .. controls (47.33,148.4) and (30,131.07) .. (30,109.7) -- cycle ;
\draw  [fill={rgb, 255:red, 0; green, 0; blue, 0 }  ,fill opacity=1 ] (66.2,71.5) .. controls (66.2,70.12) and (67.32,69) .. (68.7,69) .. controls (70.08,69) and (71.2,70.12) .. (71.2,71.5) .. controls (71.2,72.88) and (70.08,74) .. (68.7,74) .. controls (67.32,74) and (66.2,72.88) .. (66.2,71.5) -- cycle ;
\draw  [draw opacity=0][line width=1.5]  (536.67,142.94) .. controls (531.06,146.1) and (524.59,147.9) .. (517.7,147.9) .. controls (496.32,147.9) and (479,130.57) .. (479,109.2) .. controls (479,109.03) and (479,108.87) .. (479,108.7) -- (517.7,109.2) -- cycle ; \draw  [color={rgb, 255:red, 74; green, 144; blue, 226 }  ,draw opacity=1 ][line width=1.5]  (536.67,142.94) .. controls (531.06,146.1) and (524.59,147.9) .. (517.7,147.9) .. controls (496.32,147.9) and (479,130.57) .. (479,109.2) .. controls (479,109.03) and (479,108.87) .. (479,108.7) ;  
\draw  [draw opacity=0][fill={rgb, 255:red, 80; green, 227; blue, 194 }  ,fill opacity=1 ][line width=1.5]  (409.36,72.31) .. controls (419.38,91.19) and (395.32,123.57) .. (355.61,144.63) .. controls (375.61,134.03) and (383.7,110.12) .. (373.68,91.24) .. controls (363.66,72.36) and (339.33,65.65) .. (319.34,76.26) .. controls (359.04,55.2) and (399.34,53.43) .. (409.36,72.31) -- cycle ;
\draw   (321.35,75.19) .. controls (340.23,65.17) and (363.66,72.36) .. (373.68,91.24) .. controls (383.7,110.12) and (376.51,133.55) .. (357.63,143.56) .. controls (338.75,153.58) and (315.32,146.4) .. (305.31,127.52) .. controls (295.29,108.64) and (302.47,85.21) .. (321.35,75.19) -- cycle ;
\draw  [fill={rgb, 255:red, 0; green, 0; blue, 0 }  ,fill opacity=1 ] (372.07,89.27) .. controls (373.29,88.62) and (374.8,89.08) .. (375.45,90.3) .. controls (376.09,91.52) and (375.63,93.03) .. (374.41,93.68) .. controls (373.19,94.33) and (371.68,93.87) .. (371.03,92.65) .. controls (370.38,91.43) and (370.85,89.91) .. (372.07,89.27) -- cycle ;
\draw  [draw opacity=0][line width=1.5]  (321.17,74.84) .. controls (359.92,54.28) and (399.45,52.92) .. (409.47,71.8) .. controls (419.48,90.68) and (396.19,122.65) .. (357.44,143.21) -- (339.31,109.02) -- cycle ; \draw  [color={rgb, 255:red, 74; green, 144; blue, 226 }  ,draw opacity=1 ][line width=1.5]  (321.17,74.84) .. controls (359.92,54.28) and (399.45,52.92) .. (409.47,71.8) .. controls (419.48,90.68) and (396.19,122.65) .. (357.44,143.21) ;  
\draw  [draw opacity=0][line width=1.5]  (360.85,141.11) .. controls (359.62,141.95) and (358.34,142.73) .. (357,143.44) .. controls (338.12,153.46) and (314.69,146.28) .. (304.68,127.39) .. controls (294.66,108.51) and (301.84,85.09) .. (320.72,75.07) .. controls (320.87,74.99) and (321.02,74.91) .. (321.17,74.84) -- (338.86,109.26) -- cycle ; \draw  [color={rgb, 255:red, 74; green, 144; blue, 226 }  ,draw opacity=1 ][line width=1.5]  (360.85,141.11) .. controls (359.62,141.95) and (358.34,142.73) .. (357,143.44) .. controls (338.12,153.46) and (314.69,146.28) .. (304.68,127.39) .. controls (294.66,108.51) and (301.84,85.09) .. (320.72,75.07) .. controls (320.87,74.99) and (321.02,74.91) .. (321.17,74.84) ;  
\draw  [fill={rgb, 255:red, 0; green, 0; blue, 0 }  ,fill opacity=1 ] (336.2,71) .. controls (336.2,69.62) and (337.32,68.5) .. (338.7,68.5) .. controls (340.08,68.5) and (341.2,69.62) .. (341.2,71) .. controls (341.2,72.38) and (340.08,73.5) .. (338.7,73.5) .. controls (337.32,73.5) and (336.2,72.38) .. (336.2,71) -- cycle ;
\draw  [draw opacity=0][line width=1.5]  (479,113.02) .. controls (479,113.02) and (479,113.02) .. (479,113.02) .. controls (479,78.5) and (496.33,50.52) .. (517.7,50.52) .. controls (526.71,50.52) and (534.99,55.49) .. (541.57,63.82) -- (517.7,113.02) -- cycle ; \draw  [color={rgb, 255:red, 74; green, 144; blue, 226 }  ,draw opacity=1 ][line width=1.5]  (479,113.02) .. controls (479,113.02) and (479,113.02) .. (479,113.02) .. controls (479,78.5) and (496.33,50.52) .. (517.7,50.52) .. controls (526.71,50.52) and (534.99,55.49) .. (541.57,63.82) ;  
\draw  [draw opacity=0][line width=1.5]  (540.41,63.74) .. controls (555.75,63.68) and (568.34,69.04) .. (573.88,79.54) .. controls (583.86,98.45) and (567.2,126.83) .. (536.67,142.94) -- (518.61,108.71) -- cycle ; \draw  [color={rgb, 255:red, 74; green, 144; blue, 226 }  ,draw opacity=1 ][line width=1.5]  (540.41,63.74) .. controls (555.75,63.68) and (568.34,69.04) .. (573.88,79.54) .. controls (583.86,98.45) and (567.2,126.83) .. (536.67,142.94) ;  

\draw (217,8) node [anchor=north west][inner sep=0.75pt]    {$f\in L^{2}( G)$};
\draw (359,33) node [anchor=north west][inner sep=0.75pt]    {$T_{g} f$};
\draw (545,38) node [anchor=north west][inner sep=0.75pt]    {$\mathcal{T} _{\nu _\mathcal{U}} f$};
\draw (207.8,51) node [anchor=north west][inner sep=0.75pt]    {$I$};
\draw (200.8,101) node [anchor=north west][inner sep=0.75pt]    {$G$};
\draw (62.4,101) node [anchor=north west][inner sep=0.75pt]    {$G$};
\draw (69.4,51) node [anchor=north west][inner sep=0.75pt]    {$I$};
\draw (331.9,101) node [anchor=north west][inner sep=0.75pt]    {$G$};
\draw (329.4,77) node [anchor=north west][inner sep=0.75pt]    {$I$};
\draw (511.4,101) node [anchor=north west][inner sep=0.75pt]    {$G$};
\draw (507.4,78) node [anchor=north west][inner sep=0.75pt]    {$I$};
\draw (558.4,80) node [anchor=north west][inner sep=0.75pt]    {$g$};
\draw (378.9,78) node [anchor=north west][inner sep=0.75pt]    {$g$};

\draw (57,160) node [anchor=north west][inner sep=0.75pt]  [font=\large] [align=left] {(a)};
\draw (195,160) node [anchor=north west][inner sep=0.75pt]  [font=\large] [align=left] {(b)};
\draw (330,160) node [anchor=north west][inner sep=0.75pt]  [font=\large] [align=left] {(c)};
\draw (505,160) node [anchor=north west][inner sep=0.75pt]  [font=\large] [align=left] {(d)};
\end{tikzpicture}}
\caption{(a) A group $G$. (b) A function $f \in L^2(G)$ (blue lines). (c) Applying a translation operator $T_g$ on $f$. (d) Applying the average operator $\mathcal{T}_{\nu_{\mathcal{U}}}$ on $f$, where $\mathcal{U}=\{I,g\}$.}
\label{fig:appendix-translation}
\end{figure}
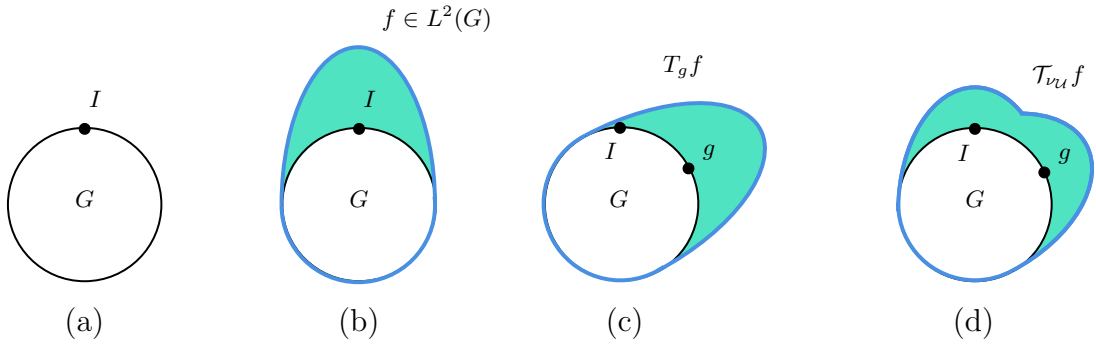
There are several properties of $\mathcal{T}_\nu$:
\begin{enumerate}
\item $\mathcal{T}_\nu$ is self-adjoint, i.e. $\langle \mathcal{T}_\nu f,g \rangle=\langle f,\mathcal{T}_\nu g \rangle$ for any $f,g \in L^2(G)$. Thus the eigenvalues of $T_\nu$ are real.
\item $\mathcal{T}_{\nu_{\mathcal{U}^l}}=\mathcal{T}^l_{\nu_\mathcal{U}}$. Since
\begin{equation*}
\begin{aligned}
    \mathcal{T}^2_{\nu_{\mathcal{U}}}f(h)&=\frac{1}{\lvert \mathcal{U} \rvert^2}\sum_{g_2 \in \mathcal{U}}\sum_{g_1 \in \mathcal{U}}f(g^{-1}_2 g_1^{-1} h)\\
     & =\frac{1}{\lvert \mathcal{U} \rvert^2}\sum_{g \in \mathcal{U}^2}f(g^{-1} h)=\mathcal{T}_{\nu_{\mathcal{U}^2}}f(h),
    \end{aligned}    
\end{equation*}
we have  $\mathcal{T}_{\nu_{\mathcal{U}^2}}=\mathcal{T}^2_{\nu_\mathcal{U}}$ and $\mathcal{T}_{\nu_{\mathcal{U}^l}}=\mathcal{T}^l_{\nu_\mathcal{U}}$.
\item $\mathcal{T}_\mu$ is a projection operator from $L^2(G)$ to its subspace of constant functions, i.e. for any $f \in L^2(G)$, $\mathcal{T}_\mu f=\int_G f(g)d\mu(g)$. And any other $\mathcal{T}_\nu$ acts trivially on this subspace of constant functions. Therefore,
\begin{equation*}
\mathcal{T}_\nu \mathcal{T}_\mu=\mathcal{T}_\mu \mathcal{T}_\nu=\mathcal{T}_\mu.
\end{equation*}
\end{enumerate}

Here we define the spectral gap of $\mathcal{U}$ as
$$
\operatorname{gap}(\mathcal{U})=1-\lVert \mathcal{T}_{\nu_{\mathcal{U}}}-\mathcal{T}_\mu \rVert _{\text{op}},
$$
where the operator norm $\lVert\cdot \rVert _{\text{op}}$ is defined with the 2-norm $\lVert\cdot \rVert _{2}$ on $L^2(G)$
\begin{equation}\label{eq:norm}
\lVert \mathcal{T}_{\nu_{\mathcal{U}}} - \mathcal{T}_\mu \rVert _{\text{op}}=\sup_{\lVert f \rVert_2=1}\lVert (\mathcal{T}_{\nu_{\mathcal{U}}} - \mathcal{T}_\mu) f \rVert_2.
\end{equation}

In linear algebra, the operator norm of a matrix is the largest singular value of the matrix.
It can be shown that the gap is a number in $[0,1]$.
It tries to measure the difference between the average operator of $\nu_\mathcal{U}$ and the average operator of $\mu$. And it also provides an exponentially decayed upper bound for how quickly $\mathcal{T}^l_{\nu_{\mathcal{U}}}$ converges to $\mathcal{T}_{\mu}$ as $l$ goes to infinity. 
\begin{equation}
    \label{eq:a1}
    \begin{aligned}
    \lVert \mathcal{T}^l_{\nu_{\mathcal{U}}}-\mathcal{T}_\mu \rVert _{\text{op}}=\lVert (\mathcal{T}_{\nu_{\mathcal{U}}}-\mathcal{T}_\mu )^l\rVert _{\text{op}} &\leq \lVert \mathcal{T}_{\nu_{\mathcal{U}}}-\mathcal{T}_\mu \rVert _{\text{op}} ^l\\ & =(1-\operatorname{gap}(\mathcal{U}))^l \leq e^{-l\cdot \operatorname{gap}(\mathcal{U})}.
    \end{aligned}  
\end{equation}

The only exception is when the spectral gap is $0$, which can be achieved when a gate set $\mathcal{U}$ is not universal.
For a universal set $\mathcal{U}$, it is believed that its spectral gap is nonzero. Bourgain and Gamburd~\cite{bourgain2012spectral} showed that for $G=\SU(d)$, if the entries of the elements of $\mathcal{U}$ are algebraic numbers, then $\operatorname{gap}(\mathcal{U}) \geq c >0$
for some constant $c$ only depending on $\mathcal{U}$. Yves Benoist and Nicolas de Saxce~\cite{benoist2016spectral} further generalized the result to any compact simple Lie group $G$ and any finite generating Diophantine set $\mathcal{U}$ of $G$.

The indicator function $f$ that tries to maximize the difference between $\mathcal{T}_\nu$ and $\mathcal{T}_\mu$ in \Cref{eq:norm} plays a central role in distinguishing the gate sets with zero and nonzero spectral gaps.
Consider a non-universal set $\mathcal{U}=\{I,H\}$, where $H$ is the Hadamard gate. The indicator function can be chosen to having the following symmetric shape, where the blue curve inside/outside the circle representing group $G$ represents negative/positive function value.

\centerline{
\begin{tikzpicture}[x=0.75pt,y=0.75pt,yscale=-1,xscale=1]

\draw  [color={rgb, 255:red, 74; green, 144; blue, 226 }  ,draw opacity=1 ][fill={rgb, 255:red, 80; green, 227; blue, 194 }  ,fill opacity=1 ][line width=1.5]  (430.25,101.7) .. controls (430.25,60.37) and (441.64,26.86) .. (455.7,26.86) .. controls (469.76,26.86) and (481.15,60.37) .. (481.15,101.7) .. controls (481.15,143.03) and (469.76,176.54) .. (455.7,176.54) .. controls (441.64,176.54) and (430.25,143.03) .. (430.25,101.7) -- cycle ;
\draw  [fill={rgb, 255:red, 255; green, 255; blue, 255 }  ,fill opacity=1 ] (417,101.7) .. controls (417,80.33) and (434.33,63) .. (455.7,63) .. controls (477.07,63) and (494.4,80.33) .. (494.4,101.7) .. controls (494.4,123.07) and (477.07,140.4) .. (455.7,140.4) .. controls (434.33,140.4) and (417,123.07) .. (417,101.7) -- cycle ;
\draw  [draw opacity=0][fill={rgb, 255:red, 80; green, 227; blue, 194 }  ,fill opacity=1 ] (494.4,101.7) .. controls (494.4,114.38) and (488.3,125.64) .. (478.87,132.7) .. controls (480.33,123.25) and (481.15,112.76) .. (481.15,101.7) .. controls (481.15,90.64) and (480.33,80.15) .. (478.87,70.7) .. controls (488.3,77.76) and (494.4,89.02) .. (494.4,101.7) -- cycle (417,101.7) .. controls (417,89.02) and (423.1,77.76) .. (432.53,70.7) .. controls (431.07,80.15) and (430.25,90.64) .. (430.25,101.7) .. controls (430.25,112.76) and (431.07,123.25) .. (432.53,132.7) .. controls (423.1,125.64) and (417,114.38) .. (417,101.7) -- cycle ;
\draw  [fill={rgb, 255:red, 0; green, 0; blue, 0 }  ,fill opacity=1 ] (453.2,63.5) .. controls (453.2,62.12) and (454.32,61) .. (455.7,61) .. controls (457.08,61) and (458.2,62.12) .. (458.2,63.5) .. controls (458.2,64.88) and (457.08,66) .. (455.7,66) .. controls (454.32,66) and (453.2,64.88) .. (453.2,63.5) -- cycle ;
\draw   (277,96.7) .. controls (277,75.33) and (294.33,58) .. (315.7,58) .. controls (337.07,58) and (354.4,75.33) .. (354.4,96.7) .. controls (354.4,118.07) and (337.07,135.4) .. (315.7,135.4) .. controls (294.33,135.4) and (277,118.07) .. (277,96.7) -- cycle ;
\draw  [fill={rgb, 255:red, 0; green, 0; blue, 0 }  ,fill opacity=1 ] (313.2,58.5) .. controls (313.2,57.12) and (314.32,56) .. (315.7,56) .. controls (317.08,56) and (318.2,57.12) .. (318.2,58.5) .. controls (318.2,59.88) and (317.08,61) .. (315.7,61) .. controls (314.32,61) and (313.2,59.88) .. (313.2,58.5) -- cycle ;
\draw  [fill={rgb, 255:red, 0; green, 0; blue, 0 }  ,fill opacity=1 ] (313.2,134.9) .. controls (313.2,133.52) and (314.32,132.4) .. (315.7,132.4) .. controls (317.08,132.4) and (318.2,133.52) .. (318.2,134.9) .. controls (318.2,136.28) and (317.08,137.4) .. (315.7,137.4) .. controls (314.32,137.4) and (313.2,136.28) .. (313.2,134.9) -- cycle ;
\draw  [fill={rgb, 255:red, 0; green, 0; blue, 0 }  ,fill opacity=1 ] (453.2,140.9) .. controls (453.2,139.52) and (454.32,138.4) .. (455.7,138.4) .. controls (457.08,138.4) and (458.2,139.52) .. (458.2,140.9) .. controls (458.2,142.28) and (457.08,143.4) .. (455.7,143.4) .. controls (454.32,143.4) and (453.2,142.28) .. (453.2,140.9) -- cycle ;
\draw   (417,101.7) .. controls (417,80.33) and (434.33,63) .. (455.7,63) .. controls (477.07,63) and (494.4,80.33) .. (494.4,101.7) .. controls (494.4,123.07) and (477.07,140.4) .. (455.7,140.4) .. controls (434.33,140.4) and (417,123.07) .. (417,101.7) -- cycle ;
\draw  [color={rgb, 255:red, 74; green, 144; blue, 226 }  ,draw opacity=1 ][line width=1.5]  (430.25,101.7) .. controls (430.25,60.37) and (441.64,26.86) .. (455.7,26.86) .. controls (469.76,26.86) and (481.15,60.37) .. (481.15,101.7) .. controls (481.15,143.03) and (469.76,176.54) .. (455.7,176.54) .. controls (441.64,176.54) and (430.25,143.03) .. (430.25,101.7) -- cycle ;

\draw (185,86) node [anchor=north west][inner sep=0.75pt]    {$\mathcal{U}=\{I,H\}$};
\draw (309.4,88) node [anchor=north west][inner sep=0.75pt]    {$G$};
\draw (316.4,38) node [anchor=north west][inner sep=0.75pt]    {$I$};
\draw (320.2,137.9) node [anchor=north west][inner sep=0.75pt]    {$H$};
\draw (457.7,143.9) node [anchor=north west][inner sep=0.75pt]    {$H$};
\draw (520,86) node [anchor=north west][inner sep=0.75pt]    {$\mathcal{T}_{\nu _\mathcal{U}} f=T_{H} f=f$};
\draw (427,16) node [anchor=north west][inner sep=0.75pt]    {$f$};
\draw (449.4,93) node [anchor=north west][inner sep=0.75pt]    {$G$};
\draw (456.4,43) node [anchor=north west][inner sep=0.75pt]    {$I$};

\end{tikzpicture}}

We further require this function to have mean value $0$ on measure $\mu$. Then we can show the spectral gap of $\mathcal{U}$ is also $0$. The average operator $\mathcal{T}_{\nu_{\mathcal{U}}}$ acts trivially on this function, which indicates $
\lVert \mathcal{T}_{\nu_{\mathcal{U}}} f - \mathcal{T}_\mu f \rVert_2 =\lVert \mathcal{T}_{\nu_{\mathcal{U}}} f \rVert_2 = \lVert f \rVert_2$ = 1.

\subsection{Spectral Gaps and $\epsilon$-Nets}
We will show that the exponentially decaying upper bound of the spectral gap in \Cref{eq:a1} can be used to bound the compiling precision $\epsilon$.
To characterize the compiling precision, we adopt the concept of $\epsilon$-net.
An $\epsilon$-net of $G$ is a finite subset $E$ of $G$ such that for any $g \in G$, there exists $g_0 \in E$ such that $d(g,g_0) \leq \epsilon$.

\begin{theorem}~\cite{slowik2023calculable}
Let $\mathcal{U}$ be a finite generating set of group $G$, and $\operatorname{gap}(\mathcal{U})>0$. For $\epsilon>0$, we have $\mathcal{U}^l$ being an $\epsilon$-net for any $l$ that satisfies
\begin{equation}\label{eq:gap-depth}
l > \frac{\dim(G)}{\operatorname{gap}(\mathcal{U})}\log (1/\epsilon)+B,
\end{equation}
for some constant $B$.
\label{thm:gap}
\end{theorem}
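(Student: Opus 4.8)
The plan is to establish the covering (net) property by a duality argument driven entirely by the spectral-gap estimate \Cref{eq:a1}. It suffices to show that every target $g^{*}\in G$ lies within distance $\epsilon$ of some element of $\mathcal{U}^{l}$. First I would set $r=\epsilon/2$ and introduce two indicator test functions on $G$, namely $f_{1}=\mathbb{1}_{B_{r}(I)}$ and $f_{2}=\mathbb{1}_{B_{r}(g^{*})}$, and study the bilinear quantity $\langle f_{2},\mathcal{T}_{\nu_{\mathcal{U}^{l}}}f_{1}\rangle$. Using the composition identity $\mathcal{T}_{\nu_{\mathcal{U}^{l}}}=\mathcal{T}_{\nu_{\mathcal{U}}}^{l}$, the gap estimate applies directly to this operator.

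The computation splits into two halves. Expanding the averaging operator and using bi-invariance of the Haar measure gives
\begin{equation*}
\langle f_{2},\mathcal{T}_{\nu_{\mathcal{U}^{l}}}f_{1}\rangle=\frac{1}{\lvert\mathcal{U}^{l}\rvert}\sum_{g\in\mathcal{U}^{l}}\mu\!\left(B_{r}(I)\cap g^{-1}B_{r}(g^{*})\right).
\end{equation*}
By bi-invariance of the metric, the $g$-th overlap is nonzero exactly when $d(g,g^{*})<2r=\epsilon$; hence if $g^{*}$ had no neighbour in $\mathcal{U}^{l}$ within $\epsilon$, this inner product would vanish. On the other hand, since $\mathcal{T}_{\mu}$ projects onto constants, $\langle f_{2},\mathcal{T}_{\mu}f_{1}\rangle=\mu(B_{r}(I))\,\mu(B_{r}(g^{*}))=V(B_{r})^{2}$, which equals the strictly positive number $\left(C_{V}r^{\dim(G)}\right)^{2}$.

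Next I would bound the gap between these two expressions. Cauchy--Schwarz together with \Cref{eq:a1} yields
\begin{equation*}
\bigl\lvert\langle f_{2},(\mathcal{T}_{\nu_{\mathcal{U}^{l}}}-\mathcal{T}_{\mu})f_{1}\rangle\bigr\rvert\leq\lVert f_{2}\rVert_{2}\,e^{-l\cdot\operatorname{gap}(\mathcal{U})}\,\lVert f_{1}\rVert_{2}=V(B_{r})\,e^{-l\cdot\operatorname{gap}(\mathcal{U})},
\end{equation*}
using $\lVert f_{i}\rVert_{2}^{2}=V(B_{r})$. If $g^{*}$ were uncovered, combining the two halves forces $V(B_{r})^{2}\leq V(B_{r})e^{-l\cdot\operatorname{gap}(\mathcal{U})}$, i.e. $C_{V}r^{\dim(G)}\leq e^{-l\cdot\operatorname{gap}(\mathcal{U})}$. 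Taking logarithms and substituting $r=\epsilon/2$ shows this is impossible precisely when $l>\frac{\dim(G)}{\operatorname{gap}(\mathcal{U})}\log(1/\epsilon)+B$ with $B=\bigl(\dim(G)\log 2-\log C_{V}\bigr)/\operatorname{gap}(\mathcal{U})$, which is exactly the stated bound. Since $g^{*}$ was arbitrary, $\mathcal{U}^{l}$ is an $\epsilon$-net.

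The step I expect to be the main obstacle is the mismatch between the $L^{2}$ operator-norm bound supplied by \Cref{eq:a1} and the pointwise/counting statement we actually need (``some product lands near $g^{*}$''). A naive attempt to bound $\mathcal{T}_{\nu_{\mathcal{U}^{l}}}f_{1}$ pointwise fails because the gap controls only the $L^{2}$ norm. The resolution---testing against the second indicator $f_{2}$ and invoking Cauchy--Schwarz---is exactly what converts the $L^{2}$ decay into a usable lower bound on the ball overlap; getting the support localization and the $C_{V}$ bookkeeping right, and ensuring $r$ is small enough for $V(B_{r})=C_{V}r^{\dim(G)}$ to hold, is the remaining care required.
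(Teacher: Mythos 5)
Your proposal is correct and is essentially the paper's own argument: the paper likewise assumes an uncovered point $g^*$, takes the (normalized) indicator of $B_{\epsilon/2}(I)$ as the test function, pairs it against the indicator of $B_{\epsilon/2}(g^*)$ via Cauchy--Schwarz, uses the triangle-inequality observation that non-coverage forces the translated supports to be disjoint (so the averaged term vanishes), and arrives at the identical inequality $C_V(\epsilon/2)^{\dim(G)}\leq e^{-l\cdot\operatorname{gap}(\mathcal{U})}$ with the same constant $B$. The only cosmetic difference is that you keep both indicators unnormalized and phrase the estimate as a bilinear form, whereas the paper normalizes one of them; the mathematics is the same.
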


\begin{proof} 
We proof the theorem by contradiction.
Equation (\ref{eq:a1}) shows that for any $\mathcal{U}$ with nonzero gap, $\mathcal{T}^l_{\nu_{\mathcal{U}}}$ converges to $\mathcal{T}_{\mu}$ as $l$ goes to infinity. In the following, we show if $\mathcal{U}^l$ is not an $\epsilon$-net of $G$, there exists a function to effectively indicate the difference between $\mathcal{T}^l_{\nu_{\mathcal{U}}}$ and $\mathcal{T}_{\mu}$, which violates \Cref{eq:a1} and gives us a contradiction.
Under the assumption that $\mathcal{U}^l$ is not an $\epsilon$-net of $G$, there exists a $g^* \in G$ such that for any $w_l \in \mathcal{U}^l$, $d(g^*,w_l)\geq \epsilon$.
We can define two $\epsilon/2$ balls $\Omega = B_{\epsilon/2}(g^*)$ and $\Omega_0 =B_{\epsilon/2}(I)$ centered at $g^*$ and $I$ respectively as shown in the following figure.

\tikzset{every picture/.style={line width=0.75pt}} 
\centerline{\begin{tikzpicture}[x=0.75pt,y=0.75pt,yscale=-0.8,xscale=0.8]
    \draw  [fill={rgb, 255:red, 248; green, 231; blue, 28 }  ,fill opacity=1 ] (263.3,220) .. controls (263.3,195.87) and (282.87,176.3) .. (307,176.3) .. controls (331.13,176.3) and (350.7,195.87) .. (350.7,220) .. controls (350.7,244.13) and (331.13,263.7) .. (307,263.7) .. controls (282.87,263.7) and (263.3,244.13) .. (263.3,220) -- cycle ;
    \draw   (342.69,141.7) .. controls (342.69,99.09) and (377.24,64.54) .. (419.85,64.54) .. controls (462.46,64.54) and (497.01,99.09) .. (497.01,141.7) .. controls (497.01,184.31) and (462.46,218.86) .. (419.85,218.86) .. controls (377.24,218.86) and (342.69,184.31) .. (342.69,141.7) -- cycle ;
    \draw   (219.4,163.7) .. controls (219.4,96.32) and (286.33,41.7) .. (368.9,41.7) .. controls (451.47,41.7) and (518.4,96.32) .. (518.4,163.7) .. controls (518.4,231.08) and (451.47,285.7) .. (368.9,285.7) .. controls (286.33,285.7) and (219.4,231.08) .. (219.4,163.7) -- cycle ;
    \draw  [fill={rgb, 255:red, 0; green, 0; blue, 0 }  ,fill opacity=1 ] (304.15,220) .. controls (304.15,218.43) and (305.43,217.15) .. (307,217.15) .. controls (308.57,217.15) and (309.85,218.43) .. (309.85,220) .. controls (309.85,221.57) and (308.57,222.85) .. (307,222.85) .. controls (305.43,222.85) and (304.15,221.57) .. (304.15,220) -- cycle ;
    \draw  [fill={rgb, 255:red, 0; green, 0; blue, 0 }  ,fill opacity=1 ] (417,141.7) .. controls (417,140.13) and (418.28,138.85) .. (419.85,138.85) .. controls (421.42,138.85) and (422.7,140.13) .. (422.7,141.7) .. controls (422.7,143.27) and (421.42,144.55) .. (419.85,144.55) .. controls (418.28,144.55) and (417,143.27) .. (417,141.7) -- cycle ;
    \draw    (307,220) -- (343.4,195.5) ;
    \draw  [fill={rgb, 255:red, 248; green, 231; blue, 28 }  ,fill opacity=1 ] (376.15,141.7) .. controls (376.15,117.57) and (395.72,98) .. (419.85,98) .. controls (443.98,98) and (463.55,117.57) .. (463.55,141.7) .. controls (463.55,165.83) and (443.98,185.4) .. (419.85,185.4) .. controls (395.72,185.4) and (376.15,165.83) .. (376.15,141.7) -- cycle ;
    \draw    (419.85,141.7) -- (361.8,91.07) ;
    \draw    (455.8,117.07) -- (419.85,141.7) ;
    \draw  [fill={rgb, 255:red, 0; green, 0; blue, 0 }  ,fill opacity=1 ] (417,141.7) .. controls (417,140.13) and (418.28,138.85) .. (419.85,138.85) .. controls (421.42,138.85) and (422.7,140.13) .. (422.7,141.7) .. controls (422.7,143.27) and (421.42,144.55) .. (419.85,144.55) .. controls (418.28,144.55) and (417,143.27) .. (417,141.7) -- cycle ;
    
    \draw (292,219) node [anchor=north west][inner sep=0.75pt]    {$I$};
    \draw (304.2,191) node [anchor=north west][inner sep=0.75pt]    {$\epsilon /2$};
    \draw (273,80) node [anchor=north west][inner sep=0.75pt]    {$G$};
    \draw (431.53,129.67) node [anchor=north west][inner sep=0.75pt]    {$\epsilon /2$};
    \draw (387.53,97.67) node [anchor=north west][inner sep=0.75pt]    {$\epsilon $};
    \draw (336.67,251.67) node [anchor=north west][inner sep=0.75pt]    {$\Omega _{0}$};
    \draw (443.33,177) node [anchor=north west][inner sep=0.75pt]    {$\Omega $};
    \draw (408,145.33) node [anchor=north west][inner sep=0.75pt]    {$g^*$};
\end{tikzpicture}}
    
The indicator function $f$ on $G$ can be defined as $f(g)=\frac{1}{V(\Omega_0)}\chi_{\Omega_0}(g)$, where
$$
\chi_{\Omega}(g)=
\left\{
    \begin{aligned}
   1, &\hspace*{0.3cm} g\in \Omega;\\
    0, & \hspace*{0.3cm} g \notin \Omega,
    \end{aligned}
\right.
$$
and $V(\Omega_0)$ is the volume of $\Omega_0$. The function $f$ is a constant function on $\Omega_0$ and zero elsewhere. It has $1$-norm $\lVert f \rVert_1=1$ and $2$-norm $\lVert f \rVert_2=1/\sqrt{V(\Omega_0)}$.
In the following, we will show this indicator function can be used to quantify the difference between $\mathcal{T}^l_{\nu_{\mathcal{U}}}$ and $\mathcal{T}_{\mu}$, which gives us a lower bound of $\|\mathcal{T}_{\nu_{\mathcal{U}}}^l-\mathcal{T}_{\mu}\|_{\text{op}}$ that appeared in \Cref{eq:a1}.

\begin{equation*}
    \left\Vert \mathcal{T}_{\nu _{\mathcal{\mathcal{U}}} }^{l } f- \mathcal{T}_{\mu } f\right\Vert _{2}\leq \left\Vert  \mathcal{T}_{\nu _{\mathcal{\mathcal{U}}}}^{l } -\mathcal{T}_{\mu }\right\Vert _{\mathrm{op}} \cdot \lVert f\rVert _{2}  \leq e^{-l \cdot \operatorname{gap} (\mathcal{U})}  \frac{1}{\sqrt{V(\Omega_0)}}.
\end{equation*}

Therefore, we have
\begin{equation}\label{eq:a2}
\sqrt{V(\Omega_0)} \cdot \left\Vert  \mathcal{T}_{\nu _{\mathcal{\mathcal{U}}}}^{l } f-\mathcal{T}_{\mu }f\right\Vert _{2} \leq e^{-l \cdot \operatorname{gap} (\mathcal{U})}.
\end{equation}

On the other hand, the left side of \Cref{eq:a2} can be calculated as
\begin{equation}\label{eq:a3}
    \begin{aligned}
    &\ \ \ \ \sqrt{V(\Omega_0)} \cdot \left\Vert  \mathcal{T}_{\nu _{\mathcal{\mathcal{U}}}}^{l } f-\mathcal{T}_{\mu }f\right\Vert _{2} \\  
    &=  \sqrt{V(\Omega)} \cdot \left\Vert  \mathcal{T}_{\nu _{\mathcal{\mathcal{U}}}}^{l } f-\mathcal{T}_{\mu }f\right\Vert _{2} \\ 
    &=  \lVert \chi _{\Omega }\rVert _{2}\cdot \left\Vert \mathcal{T}_{\mu } f-\mathcal{T}_{\nu _{\mathcal{\mathcal{U}}} }^{l } f\right\Vert _{2}\\
    &\geq \left< \chi _{\Omega } \mid \mathcal{T}_{\mu } f-\mathcal{T}_{\nu _{\mathcal{\mathcal{U}}} }^{l } f \right> \\
    & =\int _{G }\chi _{\Omega } \left(  \mathcal{T}_{\mu } f -\mathcal{T}_{\nu _{\mathcal{\mathcal{U}}}}^{l } f\right)d \mu \\ 
    & =\int _{\Omega } \left(\mathcal{T}_{\mu } f- \mathcal{T}_{\nu _{\mathcal{\mathcal{U}}}}^{l } f\right) d \mu \\
    &= \int _{\Omega }\mathcal{T}_{\mu } fd \mu -\int _{\Omega } \mathcal{T}_{\nu _{\mathcal{\mathcal{U}}}}^{l } fd \mu.
    \end{aligned}
    \end{equation}
The reason why $f$ effectively indicate the difference between $\mathcal{T}^l_{\nu_{\mathcal{U}}}$ and $\mathcal{T}_{\mu}$ is that $\mathcal{T}_{\mu } f$ is a constant function and $\mathcal{T}_{\nu _{\mathcal{\mathcal{U}}}}^{l } f$ is zero on $\Omega$. We will show this in the following. For the Haar measure $\mu$, we have
\begin{equation}
    \label{eq:a4}
    \int _{\Omega } \mathcal{T}_{\mu } fd \mu =\int _{\Omega }1d \mu =V(\Omega).
\end{equation}
With the definition for the discrete measure in \Cref{eq:translation-average}, we have
$$
\int _{\Omega }\left( \mathcal{T}_{\nu _{\mathcal{\mathcal{U}}}}^{l } f\right) (g)d \mu (g)=\frac{1}{\lvert \mathcal{U} \rvert ^l}\sum _{w_l\in \mathcal{U}^l}\int _{\Omega } f({w_l}^{-1} g)d \mu (g).
$$
For all $g \in \Omega$, we have $d(w_l^{-1}g,I)=d(g,w_l)>\epsilon/2$ (the figure above). Thus,
$$
f({w_l}^{-1} g)=\frac{1}{V(\Omega_0)}\chi_{\Omega_0}({w_l}^{-1} g)=0.
$$
Hence,
$$
\int _{\Omega } \mathcal{T}_{\nu _{\mathcal{\mathcal{U}}}}^{l } f d \mu =0.
$$
Combining this with \Cref{eq:a2}, \Cref{eq:a3} and \Cref{eq:a4}, we have
\begin{equation}
    \label{eq:a5}
    V(\Omega) \leq e^{-l \cdot \operatorname{gap} (\mathcal{U})}.
\end{equation}
Recall that $V(\Omega)=C_V(\epsilon/2)^{\dim(G)}$ with some constant number $C_V$, \Cref{eq:a5} is equivalent to
$$
l \leq \frac{\dim(G)}{\operatorname{gap}(\mathcal{U})}\log (1/\epsilon)+B,
$$
where  $B=-\frac{\log(C_V)-\dim(G)\cdot\log2}{\operatorname{gap} (\mathcal{U})}$ is a constant that only depends on $\mathcal{U}$ and $G$.
That contradicts with \Cref{eq:gap-depth}. Therefore, $\mathcal{U}^l$ is an $\epsilon$-net of $G$.
\end{proof}

\section{Inequality between Frobenius Norm and Infidelity}
\label{sec:appendix-infvsfro}
In this section, we show that the Frobenius norm of the difference between two unitary matrices is an upper bound to the square root of the infidelity between them. Let $\displaystyle A=( a_{i,j})$ and $\displaystyle B=( b_{i,j})$ be two $\displaystyle N\times N$ unitary matrices. We have  
\begin{equation*}
\begin{aligned}
    \epsilon_{I}(A,B) & =1-\frac{\left| \operatorname{Tr}\left( A^{\dagger } B\right)\right| }{N}\\
 & =\frac{1}{2N}\left( 2N-\left| \sum\limits _{i,j}\overline{a}_{i,j} b_{i,j}\right| -\left| \sum\limits _{i,j}\overline{b}_{i,j} a_{i,j}\right| \right)\\
 & =\frac{1}{2N}\left(\sum\limits _{i,j}\overline{a}_{i,j} a_{i,j} +\sum\limits _{i,j}\overline{b}_{i,j} b_{i,j} -\left| \sum\limits _{i,j}\overline{a}_{i,j} b_{i,j}\right| -\left| \sum\limits _{i,j}\overline{b}_{i,j} a_{i,j}\right| \right)\\
 & \leq \frac{1}{2N}\left(\sum\limits _{i,j}\overline{a}_{i,j} a_{i,j} +\sum\limits _{i,j}\overline{b}_{i,j} b_{i,j} -\sum\limits _{i,j}\overline{a}_{i,j} b_{i,j} -\sum\limits _{i,j}\overline{b}_{i,j} a_{i,j}\right)\\
 & =\frac{1}{2N}\sum\limits _{i,j}(\overline{a}_{i,j} a_{i,j} +\overline{b}_{i,j} b_{i,j} -\overline{a}_{i,j} b_{i,j} -\overline{b}_{i,j} a_{i,j})\\
 & =\frac{1}{2N}\sum\limits _{i,j}\left| a_{i,j} -b_{i,j}\right| ^{2} =\frac{1}{2N} \epsilon^2_F(A,B).
\end{aligned}
\end{equation*}

\section{Compilation Algorithm}
\label{sec:appendix-mim}

An exhaustive search over a depth $d$ circuit for a target unitary $\Gamma$ requires
$4^{d}$ number of unitaries to be computed via matrix multiplication. Half of such computation can be avoided via an improved brute-force method~\cite{amy2013algorithms}, shown in
\ref{alg:double}. \LinesNumberedHidden
\begin{algorithm}
    \small
    \SetAlgoNoLine
Let $\mathcal{S}$ be the set of $4^d$ gates implemented by a circuit with depth $d$, $\Gamma$ be the target gate\;
$\Gamma_{2d} \gets I$ \;
\ForEach{$X$ in $\mathcal{S}$}{
    $L \gets \Gamma X^{-1}$\;
    $\tilde{L} \gets \texttt{find\_closest}(\mathcal{S}, L)$ \tcp*[r]{Costly query over $4^d$ gates.}
    \If{$\texttt{dist}(\Gamma, \tilde{L}X) < \texttt{dist}(\Gamma, \Gamma_{2d})$}{
        $\Gamma_{2d} \gets \tilde{L}X$ \tcp*[r]{Best approximation of $\Gamma$.}
    }
}\caption{Meet-in-the-middle Algorithm for Unitary Compilation}\label{alg:double}
\end{algorithm}
We are able to use this algorithm to effectively square the precision of
compilation with cost of $\mathcal{O}(d^{m})$ with $m > 0$. The precision is characterized by operator infidelity, an approximate metric. Therefore, a data structure could be constrcuted to expedite the $\texttt{find\_closest}$ which finds the closest operator to $\tilde{L}$ amongst $4^d$ operators. One such data structure is the KD-tree~\cite{bentley1975multidimensional}. A
KD-tree is a tree where tree nodes each stores a condition that partitions the feature
space of data into two parts and the tree leaves store the data that satisfy all
conditions from root to current leaf node. Each leaf node contains points that are close to each other in the feature space. Hence, KD-tree is able to reduce the search cost to $\mathcal{O}(\log n)$ in the best case~\cite{bentley1975multidimensional}. However, in high-dimensional space where data are distributed uniformly, closeness of points loses it meaning. All points are as far away from the target as any other. This is the case for our compiling problem where the unitary gates reside in the $16$-dimensional Euclidean space. As a result, performance of querying drops to $\mathcal{O}(n)$ as was observed in the numerical experiment. As a last resort, we employ an approximate nearest neighbor algorithm, Hierarchical Navigable Small World algorithm~\cite{tellez2022similaritysearch,malkov2018efficient}, for $\mathcal{O}(\log{n})$ search of the approxmiately best compiling gate set. The Hierarchical Navigable Small World algorithm is an graph-based approximate nearest neighbor query algorithm~\cite{malkov2018efficient}. It combines the concept of skipped linked list to allow for enlarged neighborhood search, and the concept of small world network to approximate a Delaunday triangulation to efficiently search for nearest neighbors.  

\section{Compilation for gate set with only $U$ and $U^S$}
\label{sec:appendix-twou}
The results of compilation for gate set $\mathcal{U}= \{ U,U^S \}$ are shown in \Cref{fig:twou}. We observe a similar scaling of $d \sim \log(1/\epsilon)$ with a prefactor $A^*_I=\frac{N^2-1}{2\log_{10}|\mathcal{U}|}\approx 24.92$.
\begin{figure}
    \includegraphics[width=0.5\columnwidth]{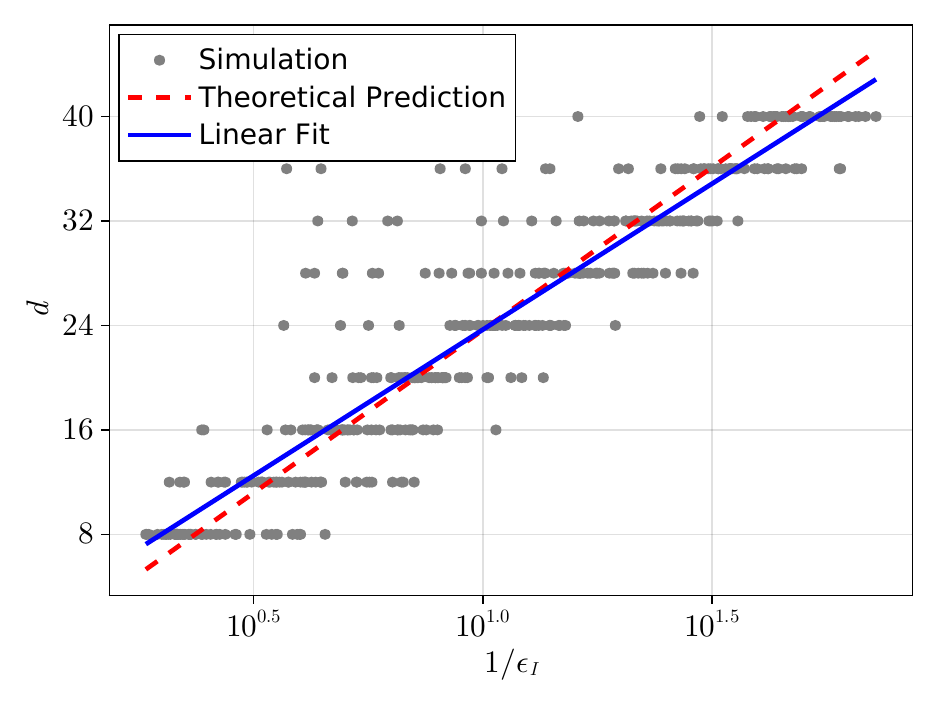}
    \caption{The required depth $d$ for compiling $CNOT$ gate with only one arbitrary $U$ and its variant $U^S$ to precision $\epsilon_I$. Each gray dot represents a randomly chosen $U$, $50$ different instances are presented. The blue line is the linear fitting of the experimental data. The red dashed line is the theoretical estimation.}
    \label{fig:twou}
\end{figure}

\section{Measurement-Free Quantum Error Correction for Repetition Code and Shor Code}
\label{sec:appendix-mfqec}
In this section, we use the 3-qubit bit-flip repetition code as an example to illustrate the MF-QEC protocol and show the MF-QEC circuit for the Shor code. The 3-qubit bit-flip repetition code is defined by the stabilizers $Z_1Z_2$ and $Z_2Z_3$. The logical state $|\overline{0}\rangle _{b} $ and $|\overline{1}\rangle _{b}$ are
$$
\begin{aligned}
    |\overline{0}\rangle _{b}&= |000\rangle \\
    |\overline{1}\rangle _{b}&= |111\rangle.
\end{aligned}
$$
Besides $Z_1Z_2$ and $Z_2Z_3$, the stabilizer group also includes $Z_1Z_3$. Each bit-flip error on a single qubit will cause two of the stabilizers to have the eigenvalue of $-1$, shown in \Cref{fig:reptcode}. And, these stabilizers will indicate the location of the error. We use them to control the correction operators. The circuit of syndrome extraction and controlled error correction for repetition code is shown in \Cref{fig:qcbf}.
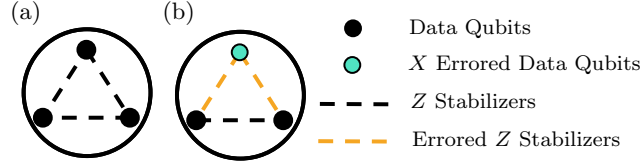
\begin{figure}

    \tikzset{every picture/.style={line width=0.75pt}} 

    \begin{tikzpicture}[x=0.75pt,y=0.75pt,yscale=-0.8,xscale=0.8]
    
    \draw [line width=1.5]  [dash pattern={on 5.63pt off 4.5pt}]  (201.17,73.5) -- (247.33,73.75) ;
    \draw [color={rgb, 255:red, 245; green, 166; blue, 35 }  ,draw opacity=1 ][line width=1.5]  [dash pattern={on 5.63pt off 4.5pt}]  (200.17,97) -- (246.33,97.25) ;
    \draw [color={rgb, 255:red, 245; green, 166; blue, 35 }  ,draw opacity=1 ][line width=1.5]  [dash pattern={on 5.63pt off 4.5pt}]  (150.63,40.93) -- (178.07,83.8) ;
    \draw [color={rgb, 255:red, 245; green, 166; blue, 35 }  ,draw opacity=1 ][line width=1.5]  [dash pattern={on 5.63pt off 4.5pt}]  (150.63,40.93) -- (123.19,83.8) ;
    \draw  [line width=1.5]  (111.1,68) .. controls (111.1,45.91) and (129.01,28) .. (151.1,28) .. controls (173.2,28) and (191.1,45.91) .. (191.1,68) .. controls (191.1,90.1) and (173.2,108) .. (151.1,108) .. controls (129.01,108) and (111.1,90.1) .. (111.1,68) -- cycle ;
    \draw  [fill={rgb, 255:red, 0; green, 0; blue, 0 }  ,fill opacity=1 ][line width=1.5]  (118.19,83.8) .. controls (118.19,81.04) and (120.43,78.8) .. (123.19,78.8) .. controls (125.95,78.8) and (128.19,81.04) .. (128.19,83.8) .. controls (128.19,86.57) and (125.95,88.8) .. (123.19,88.8) .. controls (120.43,88.8) and (118.19,86.57) .. (118.19,83.8) -- cycle ;
    \draw  [fill={rgb, 255:red, 0; green, 0; blue, 0 }  ,fill opacity=1 ][line width=1.5]  (173.07,83.8) .. controls (173.07,81.04) and (175.31,78.8) .. (178.07,78.8) .. controls (180.83,78.8) and (183.07,81.04) .. (183.07,83.8) .. controls (183.07,86.57) and (180.83,88.8) .. (178.07,88.8) .. controls (175.31,88.8) and (173.07,86.57) .. (173.07,83.8) -- cycle ;
    \draw  [line width=1.5]  (111.07,68) .. controls (111.07,45.91) and (128.98,28) .. (151.07,28) .. controls (173.16,28) and (191.07,45.91) .. (191.07,68) .. controls (191.07,90.1) and (173.16,108) .. (151.07,108) .. controls (128.98,108) and (111.07,90.1) .. (111.07,68) -- cycle ;
    \draw  [fill={rgb, 255:red, 80; green, 227; blue, 194 }  ,fill opacity=1 ] (145.63,40.93) .. controls (145.63,38.17) and (147.87,35.93) .. (150.63,35.93) .. controls (153.39,35.93) and (155.63,38.17) .. (155.63,40.93) .. controls (155.63,43.69) and (153.39,45.93) .. (150.63,45.93) .. controls (147.87,45.93) and (145.63,43.69) .. (145.63,40.93) -- cycle ;
    \draw [color={rgb, 255:red, 0; green, 0; blue, 0 }  ,draw opacity=1 ][line width=1.5]  [dash pattern={on 5.63pt off 4.5pt}]  (123.19,83.8) -- (178.07,83.8) ;
    
    \draw  [line width=1.5]  (14.44,66.4) .. controls (14.44,44.31) and (32.35,26.4) .. (54.44,26.4) .. controls (76.53,26.4) and (94.44,44.31) .. (94.44,66.4) .. controls (94.44,88.49) and (76.53,106.4) .. (54.44,106.4) .. controls (32.35,106.4) and (14.44,88.49) .. (14.44,66.4) -- cycle ;
    \draw  [fill={rgb, 255:red, 0; green, 0; blue, 0 }  ,fill opacity=1 ][line width=1.5]  (21.52,82.2) .. controls (21.52,79.44) and (23.76,77.2) .. (26.52,77.2) .. controls (29.29,77.2) and (31.52,79.44) .. (31.52,82.2) .. controls (31.52,84.96) and (29.29,87.2) .. (26.52,87.2) .. controls (23.76,87.2) and (21.52,84.96) .. (21.52,82.2) -- cycle ;
    \draw  [fill={rgb, 255:red, 0; green, 0; blue, 0 }  ,fill opacity=1 ][line width=1.5]  (76.4,82.2) .. controls (76.4,79.44) and (78.64,77.2) .. (81.4,77.2) .. controls (84.16,77.2) and (86.4,79.44) .. (86.4,82.2) .. controls (86.4,84.96) and (84.16,87.2) .. (81.4,87.2) .. controls (78.64,87.2) and (76.4,84.96) .. (76.4,82.2) -- cycle ;
    \draw  [fill={rgb, 255:red, 0; green, 0; blue, 0 }  ,fill opacity=1 ][line width=1.5]  (48.96,39.32) .. controls (48.96,36.56) and (51.2,34.32) .. (53.96,34.32) .. controls (56.72,34.32) and (58.96,36.56) .. (58.96,39.32) .. controls (58.96,42.09) and (56.72,44.32) .. (53.96,44.32) .. controls (51.2,44.32) and (48.96,42.09) .. (48.96,39.32) -- cycle ;
    \draw  [dash pattern={on 5.63pt off 4.5pt}][line width=1.5]  (53.96,39.32) -- (81.4,82.2) -- (26.52,82.2) -- cycle ;
    \draw  [line width=1.5]  (14.4,66.4) .. controls (14.4,44.31) and (32.31,26.4) .. (54.4,26.4) .. controls (76.49,26.4) and (94.4,44.31) .. (94.4,66.4) .. controls (94.4,88.49) and (76.49,106.4) .. (54.4,106.4) .. controls (32.31,106.4) and (14.4,88.49) .. (14.4,66.4) -- cycle ;
    
    \draw  [fill={rgb, 255:red, 80; green, 227; blue, 194 }  ,fill opacity=1 ] (216.63,48.93) .. controls (216.63,46.17) and (218.87,43.93) .. (221.63,43.93) .. controls (224.39,43.93) and (226.63,46.17) .. (226.63,48.93) .. controls (226.63,51.69) and (224.39,53.93) .. (221.63,53.93) .. controls (218.87,53.93) and (216.63,51.69) .. (216.63,48.93) -- cycle ;
    \draw  [fill={rgb, 255:red, 0; green, 0; blue, 0 }  ,fill opacity=1 ] (217.07,25.8) .. controls (217.07,23.04) and (219.31,20.8) .. (222.07,20.8) .. controls (224.83,20.8) and (227.07,23.04) .. (227.07,25.8) .. controls (227.07,28.57) and (224.83,30.8) .. (222.07,30.8) .. controls (219.31,30.8) and (217.07,28.57) .. (217.07,25.8) -- cycle ;
    
    \draw (255.5,18.5) node [anchor=north west][inner sep=0.75pt]  [font=\footnotesize] [align=left] {{\fontfamily{helvet}\selectfont Data Qubits}};
    \draw (256.5,65.17) node [anchor=north west][inner sep=0.75pt]  [font=\footnotesize] [align=left] {$\displaystyle Z$ {Stabilizers}};
    \draw (255,41) node [anchor=north west][inner sep=0.75pt]  [font=\footnotesize] [align=left] {$\displaystyle X$ {Errored Data Qubits}};
    \draw (257.5,88.67) node [anchor=north west][inner sep=0.75pt]  [font=\footnotesize] [align=left] {{Errored }$\displaystyle Z$ {Stabilizers}};
    \draw (4,7) node [anchor=north west][inner sep=0.75pt]   [align=left] {(a)};
    \draw (100,7) node [anchor=north west][inner sep=0.75pt]   [align=left] {(b)};
    \end{tikzpicture}
\caption{3-qubit bit-flip repetition code. (a) The dashed lines between qubits represent stabilizers $Z_1Z_2$, $Z_2Z_3$ and $Z_1Z_3$. (b) When there is an $X$ error on one qubit, the $Z$ stabilizers near the error qubit will become syndrome.}\label{fig:reptcode}
\end{figure}

Shor code is the concatenation of two repetition code. Therefore, we can use a similar MF-QEC method for Shor code~\cite{veroni2024optimized}, shown in \Cref{fig:qcshorx} and \Cref{fig:qcshorz}.

\begin{figure}
    \tikzset{every picture/.style={line width=0.75pt}} 

    \begin{tikzpicture}[x=0.75pt,y=0.75pt,yscale=-1,xscale=1]
    
    \draw  [fill={rgb, 255:red, 248; green, 231; blue, 28 }  ,fill opacity=0.5 ][dash pattern={on 4.5pt off 4.5pt}] (85,66) -- (264,66) -- (264,163) -- (85,163) -- cycle ;

    \draw  [fill={rgb, 255:red, 184; green, 233; blue, 134 }  ,fill opacity=0.5 ][dash pattern={on 4.5pt off 4.5pt}] (32,20) -- (83,20) -- (83,113) -- (32,113) -- cycle ;

    \draw  [fill={rgb, 255:red, 80; green, 227; blue, 194 }  ,fill opacity=0.5 ][dash pattern={on 4.5pt off 4.5pt}] (267,20) -- (317,20) -- (317,163) -- (267,163) -- cycle ;

    \draw (175,89.14) node  {\includegraphics[width=230pt]{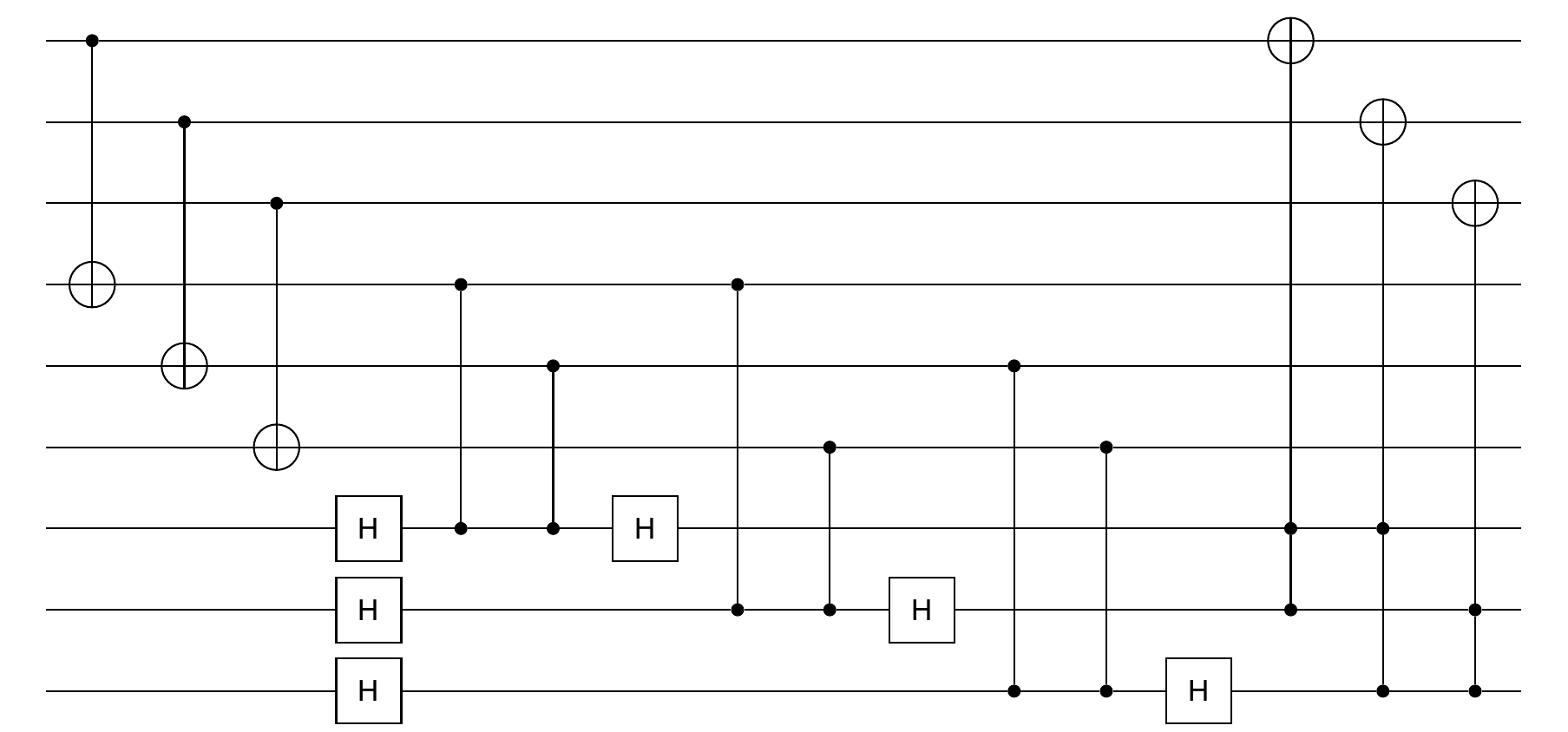}};

    \draw  [dash pattern={on 4.5pt off 4.5pt}]  (5,66) -- (330,66) ;
    \draw  [dash pattern={on 4.5pt off 4.5pt}]  (5,113) -- (330,113) ;
    
    \draw (10,126) node [anchor=north west][inner sep=0.75pt]  [font=\footnotesize]  {$|0\rangle ^{\otimes 3}$};
    \draw (10,79) node [anchor=north west][inner sep=0.75pt]  [font=\footnotesize]  {$|\overline{0} \rangle _{b}$};
    \draw (10,30) node [anchor=north west][inner sep=0.75pt]  [font=\footnotesize]  {$|\overline{\psi } \rangle _{b}$};
    \draw (24,6) node [anchor=north west][inner sep=0.75pt]  [font=\footnotesize] [align=left] {Error Copy};
    \draw (130,165) node [anchor=north west][inner sep=0.75pt]  [font=\footnotesize] [align=left] {Syndrome Extraction};
    \draw (238,6) node [anchor=north west][inner sep=0.75pt]  [font=\footnotesize] [align=left] {Controlled Correction}; 
\end{tikzpicture}
\caption{MF-QEC circuit for bit-flip repetition code. The circuit includes Steane type syndrome extraction (yellow) and controlled error correction~\cite{steane1996error} (cyan). The syndromes of the stabilizers are extracted to the last 3 qubits. Finally, the errors are corrected by Toffoli gates.}\label{fig:qcbf}
\end{figure}

\usetikzlibrary{decorations.pathreplacing}
\begin{figure}
    \begin{tikzpicture}[x=0.75pt,y=0.75pt,yscale=-1,xscale=1]
        
        \draw (255,100) node  {\includegraphics[width=0.62\columnwidth]{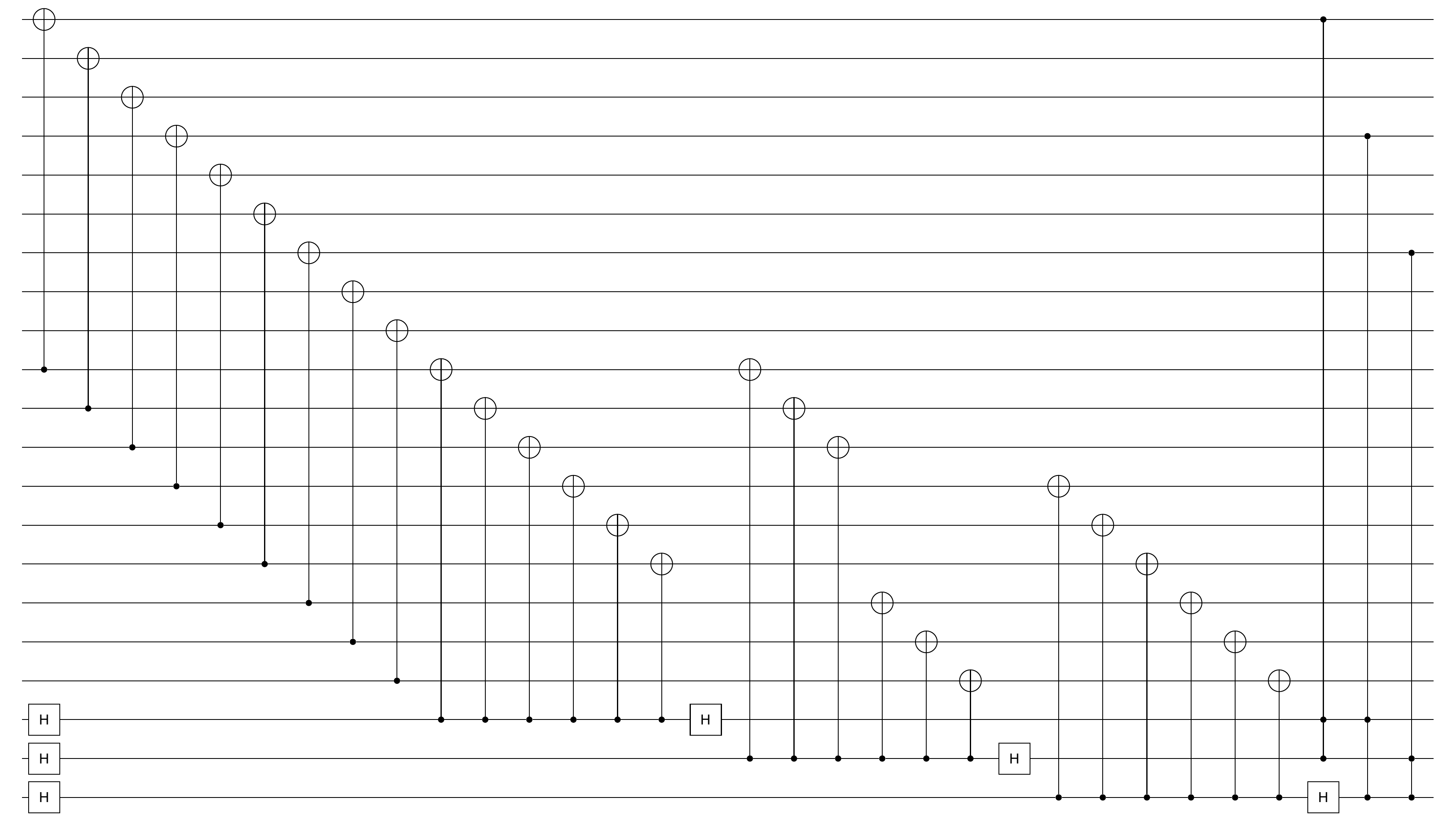}};
        
        \draw [decorate,decoration={brace,mirror,amplitude=2pt},xshift=35pt,yshift=-11pt](0.5,0.5) -- (0.5,95.0) node [black,midway,xshift=-0.6cm] {$|\overline{\psi} \rangle $};
        
        \draw [decorate,decoration={brace,mirror,amplitude=2pt},xshift=35pt,yshift=65pt](0.5,0.5) -- (0.5,95.0) node [black,midway,xshift=-0.6cm] {$|\overline{+} \rangle $};
        
        \draw [decorate,decoration={brace,mirror,amplitude=2pt},xshift=35pt,yshift=141pt](0.5,0.5) -- (0.5,28.0) node [black,midway,xshift=-0.6cm] {$|0\rangle ^{\otimes 3}$};
        
        \end{tikzpicture}
        \caption{The circuit for extracting syndromes of $X$ stabilizers and then correcting the corresponding errors. The syndromes are extracted to the last three qubits.}\label{fig:qcshorx} 
    \end{figure}

\begin{figure}
\begin{tikzpicture}[x=0.75pt,y=0.75pt,yscale=-1,xscale=1]

\draw (350,100) node  {\includegraphics[width=0.9\columnwidth]{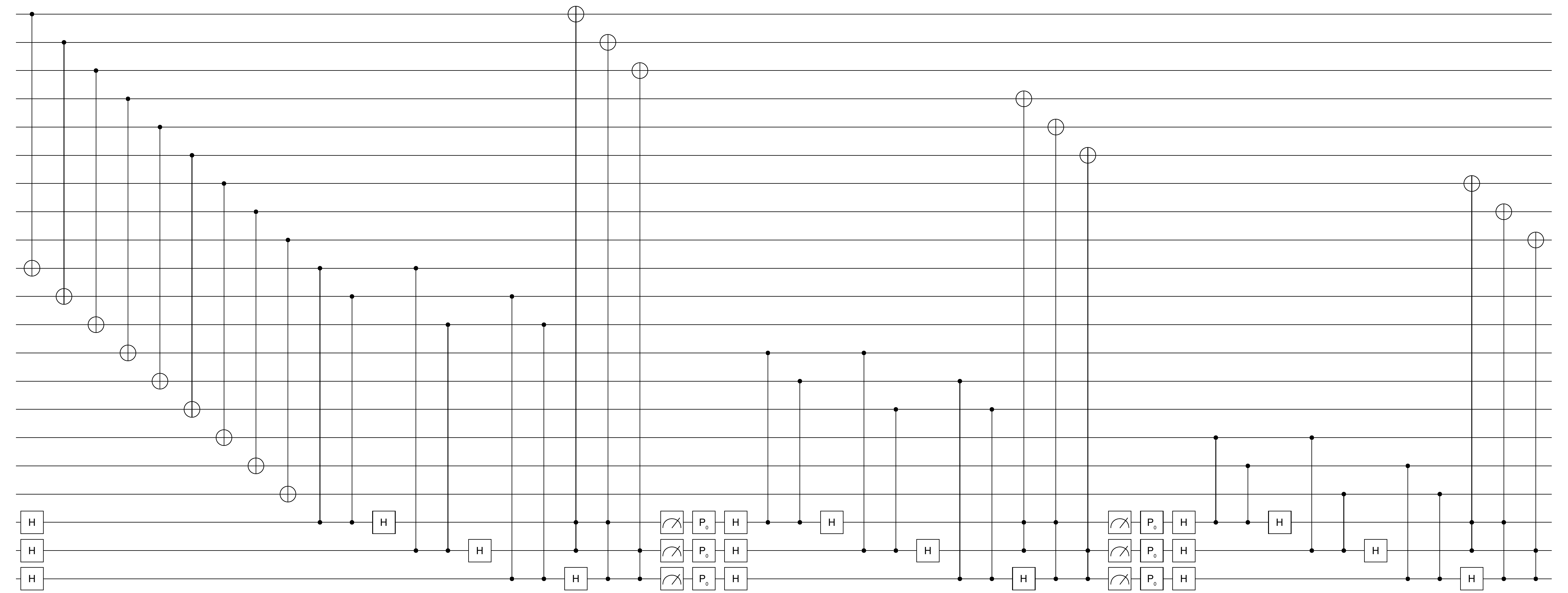}};

\draw [decorate,decoration={brace,mirror,amplitude=2pt},xshift=35pt,yshift=-10pt](0.5,0.5) -- (0.5,95.0) node [black,midway,xshift=-0.6cm] {$|\overline{\psi} \rangle $};

\draw [decorate,decoration={brace,mirror,amplitude=2pt},xshift=35pt,yshift=64pt](0.5,0.5) -- (0.5,95.0) node [black,midway,xshift=-0.6cm] {$|\overline{+} \rangle $};

\draw [decorate,decoration={brace,mirror,amplitude=2pt},xshift=35pt,yshift=139pt](0.5,0.5) -- (0.5,28.0) node [black,midway,xshift=-0.6cm] {$|0\rangle ^{\otimes 3}$};

\end{tikzpicture}
\caption{The circuit for extracting syndromes of $Z$ stabilizers and then correcting the corresponding errors. The syndromes are extracted to the last three qubits.}\label{fig:qcshorz} 
\end{figure}

\end{document}